\theoremstyle{definition}
\newtheorem{Theorem}{Theorem}
\newtheorem{Corollary}{Corollary}
\newtheorem{Remark}{Remark}
\newtheorem{Definition}{Definition}
\DeclareMathOperator*{\argmax}{arg\,max}
\DeclareMathOperator*{\argmin}{arg\,min}
\begin{document}

\title{The Privacy-Utility Tradeoff in Rank-Preserving Dataset Obfuscation} 


\author{%

\IEEEauthorblockN{ Mahshad Shariatnasab, Farhad Shirani, S. Sitharma Iyengar\thanks{This work was supported in part by NSF grants CCF-2241057.}}
\IEEEauthorblockA{\\Florida International University
\\Email: mshar075@fiu.edu, fshirani@fiu.edu, iyengar@fiu.edu}
}

\maketitle

\begin{abstract}
 Dataset obfuscation refers to techniques in which random noise is added to the entries of a given dataset, prior to its public release, to protect against leakage of private information. In this work, dataset obfuscation under two objectives is considered: i) rank-preservation: to preserve the row ordering in the obfuscated dataset induced by a given rank function, and ii) anonymity: to protect user anonymity under fingerprinting attacks. The first objective, rank-preservation, is of interest in applications such as the design of search engines and recommendation systems, feature matching, and social network analysis. Fingerprinting attacks, considered in evaluating the anonymity objective, are privacy attacks where an attacker  constructs a fingerprint of a victim based on its observed activities, such as online web activities, and compares this fingerprint with information extracted from a publicly released obfuscated dataset to identify the victim.
By evaluating the performance limits of a class of obfuscation mechanisms over asymptotically large datasets,
a fundamental trade-off is quantified between rank-preservation and user anonymity. 
Single-letter obfuscation mechanisms are considered, where each entry in the dataset is perturbed by independent noise, and their fundamental performance limits are characterized by leveraging large deviation techniques. The optimal obfuscating test-channel, optimizing the privacy-utility tradeoff, is characterized in the form of a convex optimization problem which can be solved efficiently. 
Numerical simulations of various scenarios are provided to verify the theoretical derivations.
\end{abstract}

\section{Introduction}

Dataset privacy is a major concern due to the potential risks associated with the misuse of personal and sensitive information included in various datasets. If the data to be released has no immediate utility,
then cryptographic methods suffice to preserve privacy \cite{wang2019privacy,salomon2003data}. However, when data is released publicly for a specific immediate utility --- such as the release of \textit{anonymized} social network data to advertising companies --- the  necessarily unencrypted disclosure incurs a privacy
risk and may lead to unwanted inferences \cite{martin2017role,narayanan2009anonymizing,takbiri2018matching, calandrino2011you, shariatnasab2022fundamental ,wondracek2010practical}. Obfuscation provides a mitigating solution, by introducing noise in the dataset entries prior to their release. This leads to a privacy-utility tradeoff, where increased perturbation of the dataset entries via random noise leads to increased privacy at the expense of lost utility. 
In this work, we study this fundamental privacy-utility tradeoff and characterize optimal obfuscation strategies, where  privacy is evaluated under fingerprinting attacks \cite{takbiri2018matching, calandrino2011you,shariatnasab2022fundamental,wondracek2010practical}, and utility is measured via metrics associated with rank-preservation \cite{jeong2022ranking, alabi2022private, yan2020private, shang2014application}. 


Obfuscation mechanisms protect privacy via noisy perturbations of the dataset entries. A widely studied class of obfuscation mechanisms is to perturb each dataset entry independently by passing them through identical test-channels \cite{takbiri2018matching, shi2016privacy, zurbaran2015near, wightman2011evaluation}. We call these   mechanisms \textit{single-letter obfuscation mechanisms} since their operations can be characterized using single-letter conditional probability measures. Single-letter obfuscation mechanisms, as opposed to multi-letter mechanisms, 
are amiable to analysis, and they have good performance under specific utility metrics such as the variational distance and Euclidean distance metrics \cite{zamani2021data, rassouli2019optimal, sankar2013utility, liao2019tunable}. Furthermore, perturbation via independent noise reduces information leakage among entries of the obfuscated dataset. Consequently, in this work, we focus our study to single-letter obfuscation mechanisms and their fundamental performance limits. 


Rank-preservation is a utility metric of interest in dataset obfuscations \cite{jeong2022ranking, alabi2022private, yan2020private, shang2014application, dwork2001rank, hay2017differentially}. In general, 
for a given dataset $\mathsf{X}$ with $n\in \mathbb{N}$ rows, a rank function $R:[n]\to [n]$ is a mapping which assigns an ordering to the rows of the dataset. For instance, let us consider a social network with $n\in\mathbb{N}$ users, and let $\mathsf{X}=[X_{i,j}]_{i,j\in [n]}$ be the adjacency matrix capturing the user's connections in the social network, where $X_{i,j}=1$ if the $i$th and $j$th users are connected, and $X_{i,j}=0$ otherwise. The user-degree-based rank function induces an ordering of the users based on number of connections, i.e. $R(i)<R(i')$ if $\sum_{j\in [n]}X_{i,j}<\sum_{j\in [n]}X_{i',j}$. Rank functions are used in the design of search engines, social network analysis, feature matching, and recommendation systems \cite{liu2021survey, brancotte2015rank, jeong2022ranking}. Rank recovery algorithms reconstruct the rank function associated with a given dataset based on noisy observations, e.g., by observing an obfuscated dataset. That is, given an obfuscated dataset $\mathsf{Y}$, a rank-recovery algorithm produces a reconstruction $\widehat{R}(\cdot)$ of the rank function $R(\cdot)$ associated with the original dataset $\mathsf{X}$. The performance of the rank recovery algorithm is measured with respect to an underlying  distortion metric, measuring the distance between the original and recovered rank-functions. A widely used distortion metric, considered in this work, is the Kendall's rank correlation coefficient (KRCC) \cite{alabi2022private, brancotte2015rank, jeong2022ranking, zamani2021data}. The KRCC distance $d(R,\widehat{R})$ counts the number of pairwise disagreements between the two rank functions, i.e. $d(R,\widehat{R})\triangleq \sum_{i\in [n]}\mathbbm{1}(R(i)\neq \widehat{R}(i))$. 


We study the privacy-utility tradeoff in database obfuscation, where the utility objective is rank-preservation discussed in the prequel, and privacy is evaluated under fingerprinting attacks. Fingerprinting attacks are a major threat to users' privacy in social networks, mobility networks, and wireless networks, among others \cite{takbiri2019asymptotic, su2017anonymizing, domingo2016database}. In these attacks, given an obfuscated dataset, the attacker's objective is to identify the row in the dataset corresponding to a victim by acquiring a partial fingerprint of the victim's real-world activities, comparing it  with each of the rows in the obfuscated dataset, and detecting the row with correlated entries (Figure \ref{fig:model}). To provide an example, let us consider online fingerprinting attacks which rely on social network group memberships \cite{wondracek2010practical, calandrino2011you, shariatnasab2022fundamental}. In such scenarios, an attacker controls a malicious website, the victim is a visitor to the website,  and the attacker uses 
 browser history sniffing techniques to extract a partial list of social network groups visited by the victim \cite{wondracek2010practical, narayanan2009anonymizing}. The extracted information can be represented by a binary vector $F^q=(F_1,F_2,\cdots,F_q), q\in \mathbb{N}$, where $F_i=1$ if the victim has visited the $i$th social network group's website, and $F_i=0$, otherwise. The vector  $F^q$ is called the fingerprint of the victim. To identify the victim's social network account, the attacker scans the social network and acquires a (obfuscated) dataset $\mathsf{Y}$ capturing the public group memberships in the social network. It then compares the fingerprint $F^q$ and the dataset $\mathsf{Y}$ to find the closest match and identify the victim. 
 In practice, the attacker acquires each fingerprint element by \textit{querying} the user's activities, and there is a cost associated with each query. For instance, in social network fingerprinting attacks described above, the state-of-the-art browser history sniffing techniques can make between tens to several thousand queries per second depending on the victim's device and web browser \cite{smith2018browser,solomos2021tales}. So, the cost associated with each fingerprint element is the time spent to query the value of that element using browser history sniffing. As a result, the length of the partial fingerprint is determined by the attacker's resources. In this work, the privacy objective under consideration is to minimize the information leakage about the victim's identity given a partial fingerprint $F^q$ with a fixed length $q\in \mathbb{N}$. 
\\The following is a summary of our contributions: 
 \begin{itemize}[leftmargin=*]
     \item To formulate the dataset obfuscation problem under the rank-preservation and anonymity constraints. 
     \item To evaluate the fundamental performance limits of single-letter obfuscation mechanisms and quantify a tradeoff between the two objectives. This allows the system designer to choose the appropriate amount of obfuscation through the choice of a single-letter test-channel by optimizing the aforementioned trade-off. 
     \item To characterize the optimal obfuscating test-channel, optimizing the privacy-utility tradeoff, in the form of a convex optimization problem.
     \item To provide numerical simulations under various statistical scenarios. 
 \end{itemize}

\begin{figure}[t]
 \centering \includegraphics[width=0.7\linewidth, draft=false]{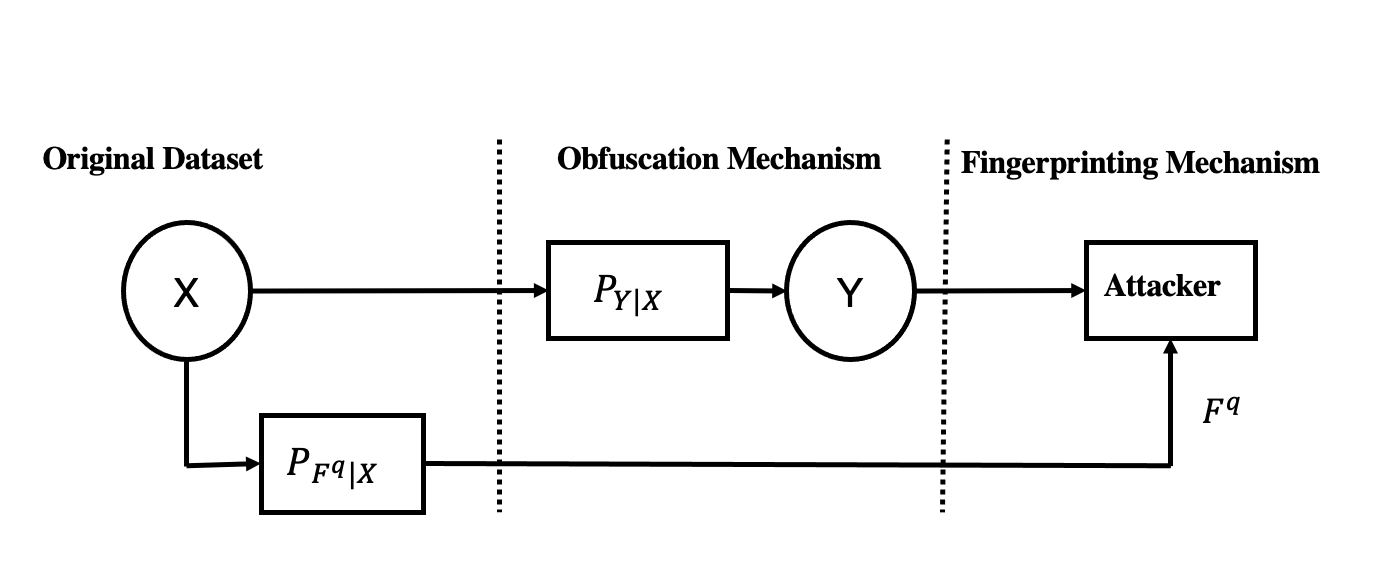}
 \caption{$\mathsf{X}$ represents the original dataset. $\mathsf{Y}$ represents the obfuscated dataset. The attacker acquires the fingerprint vector $F^q$ by querying the victim's activities and attempts to identify the victim by comparing $F^q$ and $\mathsf{Y}$.}
 \label{fig:model}
\end{figure}

\textit{Notation:} The random variable $\mathbbm{1}_{\mathcal{E}}$ is the indicator of the event $\mathcal{E}$.
 The set $\{n,n+1,\cdots, m\}, n,m\in \mathbb{N}$ is represented by $[n,m]$, and for the interval $[1,m]$, we use the shorthand notation $[m]$. 
 For a given $n\in \mathbb{N}$, the $n$-length vector $(x_1,x_2,\hdots, x_n)$ is written as $x^n$, $\underline{x}$, and $(x_i)_{i\in [n]}$, interchangably. The notation $[x_{i,j}]_{i\in [n],j_[m]}$ denotes an $n\times m$ matrix, where $x_{i,j}$ is the element on the $i$th row and $j$th column. We use sans-serif letter such as $\mathsf{X}$ and $\mathsf{x}$ to represent matrices.

\vspace{-0.05in}
\section{Problem formulation}
\label{sec:page-limit}

\label{sec1}
\vspace{-0.05in}
In this section, we describe the mathematical formulation of the dataset obfuscation problem shown in Figure \ref{fig:2}.
\\\textbf{Random Dataset:} A dataset 
is a matrix $\mathsf{X}=[x_{i,j}]_{i\in [n], j \in [m]}$, where $x_{i,j} \in \mathcal{X}$, the set $\mathcal{X}$ is finite, and $n,m\in \mathbb{N}$. Each row $x_i^m=(x_{i,1}, x_{i,2}, x_{i,3}, \cdots x_{i,m}), i\in[n]$ is called an entry of the dataset, $m \in \mathbb{N}$ is the length of the entries, $n \in \mathbb{N}$ is the size of the dataset. The dataset is said to have $n$ members. 
We consider stochastically generated datasets with independent and identically distributed (IID) elements, where given a distribution $P_X$ defined on alphabet $\mathcal{X}$, we have:
\[P(\mathsf{X}=[x_{i,j}]_{i\in [n], j \in [m]})= \prod_{i\in [n],j\in [m]}P_X(x_{i,j})\]
A random dataset is parameterized by $(n,m,\mathcal{X},P_X)$.
\\\textbf{Original and Obfuscated Datasets:} An agent, Alice, has access to an original dataset $\mathsf{X}$ parameterized by $(n,m,\mathcal{X},P_X)$. Alice wishes to disclose an obfuscated dataset $\mathsf{Y}=f(\mathsf{X})$ to Bob, where $f(\cdot)$ is a possibly stochastic function captured by $P_{\mathsf{Y}|\mathsf{X}}$. Bob's objective is to recover the row-ordering of the original dataset, with respect to a given rank function $R(\cdot)$, by leveraging the obfuscated dataset. The rank function and privacy constraints under consideration are described in more detail in the sequel. 
\\\textbf{Privacy Objective:} An attacker, Eve, gains access to the disclosed dataset $\mathsf{Y}$. Eve's objective is to identify the dataset entry corresponding to a specific victim. To elaborate, we let $U$ represent the row index corresponding to the victim of interest. The random variable $U$ is assumed to be uniformly distributed on $[n]$. Eve acquires a partial fingerprint $F^q$ of the row elements  $(X_{U,i_1},X_{U,i_2},\cdots, X_{U,i_q})$ corresponding to the victim, where 
\begin{align*}
P(F^q=f^q|   (X_{U,i_j})_{j\in [q]}=x^q) =\prod_{i=1}^q P_{F|X}(f_i|x_{i_j}), f^{q}, x^{q}\in \mathcal{F}^q\times \mathcal{X}^q,
\end{align*}
and $P_{F|X}$ is a collection of $|\mathcal{X}|$ probability measures defined on a finite set $\mathcal{F}$. The fingerprint vector and obfuscated dataset are conditionally independent of each other given the original dataset, i.e.,  the Markov chain $F_i\! \!\leftrightarrow \!\! X_{U,i_j} \!\!\leftrightarrow\!\! Y_{U,i_j}, i_j\in [m], j\in [q]$ holds. 
One of the objectives in the dataset obfuscation problem is to minimize the information leakage between the victim's row index and Eve's observations $(\mathsf{Y},F^q)$. That is to minimize $I(U;\mathsf{Y},F^q)$. We assume that the fingerprinting process is unsupervised in the sense that Eve does not have a choice on which indices $i_j, j\in [q]$ are queried to extract the fingerprint.
\\\textbf{Rank-Preservation Objective:} In general, given a dataset $\mathsf{X}$ a rank function is a mapping $R: [n] \to [n]$ which induces an ordering on the rows in the dataset, i.e.,  $R(i)$ indicates the rank of the $i$th row of $\mathsf{X}$ induced by the rank function $R(\cdot)$.  In this work, we consider the degree-based rank function defined in the following. The degree-based rank function and its variants are used in applications such as social network analysis, search engine design, and recommendation systems \cite{chung2014brief, berkhin2005survey,voudigari2016rank}.

\begin{Definition}[\textbf{Degree-Based Rank Function}]
\label{def1}
Given a dataset $\mathsf{X}=[x_{i,j}]_{i\in [n],j\in [m]}$, the degree of the $i$th row is defined as ${D}(i)\triangleq \sum_{j=1}^mx_{i,j}$. The degree-based rank function $R_d: [n] \to [n]$  is characterized by the following relation:
\[\forall i,i'\in [n],i<i': R_d(i)\leq  R_d(i') \iff D(i)\leq D(i').\]
 \end{Definition}
\begin{Remark}
In this work, we have considered a degree-based rank function which does not discriminate between different elements of each row in calculating the degree. The analysis can potentially be extended to weighted-degree-based rank functions, where the degree is computed as a weighted sum of the row elements, i.e, ${D}(i)\triangleq \sum_{j=1}^mw_jx_{i,j}, w_j\in \mathbb{R}, i\in [n]$. 
\end{Remark}
Bob receives the obfuscated dataset $\mathsf{Y}$, and wishes to reconstruct the rank function $R(\cdot)$ associated with $\mathsf{X}$. We consider the conventionally used KRCC (e.g., see \cite{kendall1938new}) as the distortion criterion measuring the quality of Bob's reconstructed rank function $\widehat{R}(\cdot)$.
\begin{Definition}[\textbf{Kendall Rank Correlation Coefficient}]
\label{def:KRCC}
For two rank functions $R_d(\cdot)$ and $\widehat{R}_d(\cdot)$, their KRCC distance is defined as\footnote{In some texts KRCC is defined as $d'_{KRCC}(R_d,\widehat{R}_d)\triangleq\frac{2}{n(n-1)}\sum_{i<j}sign(R_d(i)-R_d(j))sign(\widehat{R}_d(i)-\widehat{R}_d(j))$. It can be observed that $d'_{KRCC}(\cdot,\cdot)=1-\frac{n}{(n-1)}d_{KRCC}(\cdot,\cdot)$. We adapt the formulation in Definition \ref{def:KRCC} as it allows for more concise arguments.}
\begin{align}
   & d_{KRCC}(R_d,\widehat{R}_d)\triangleq \frac{1}{n(n-1)}\!\!\!\sum_{(i,j)\in [n]}\!\!\! \mathbbm{1}\Big(R_d(i)>{R}_d(j) \!\!\And\!\! \widehat{R}_d(i)\!<\!\widehat{R}_d(j)\Big)
\end{align}
\end{Definition}

\begin{figure}[t]
 \centering \includegraphics[width=0.6\linewidth, draft=false]{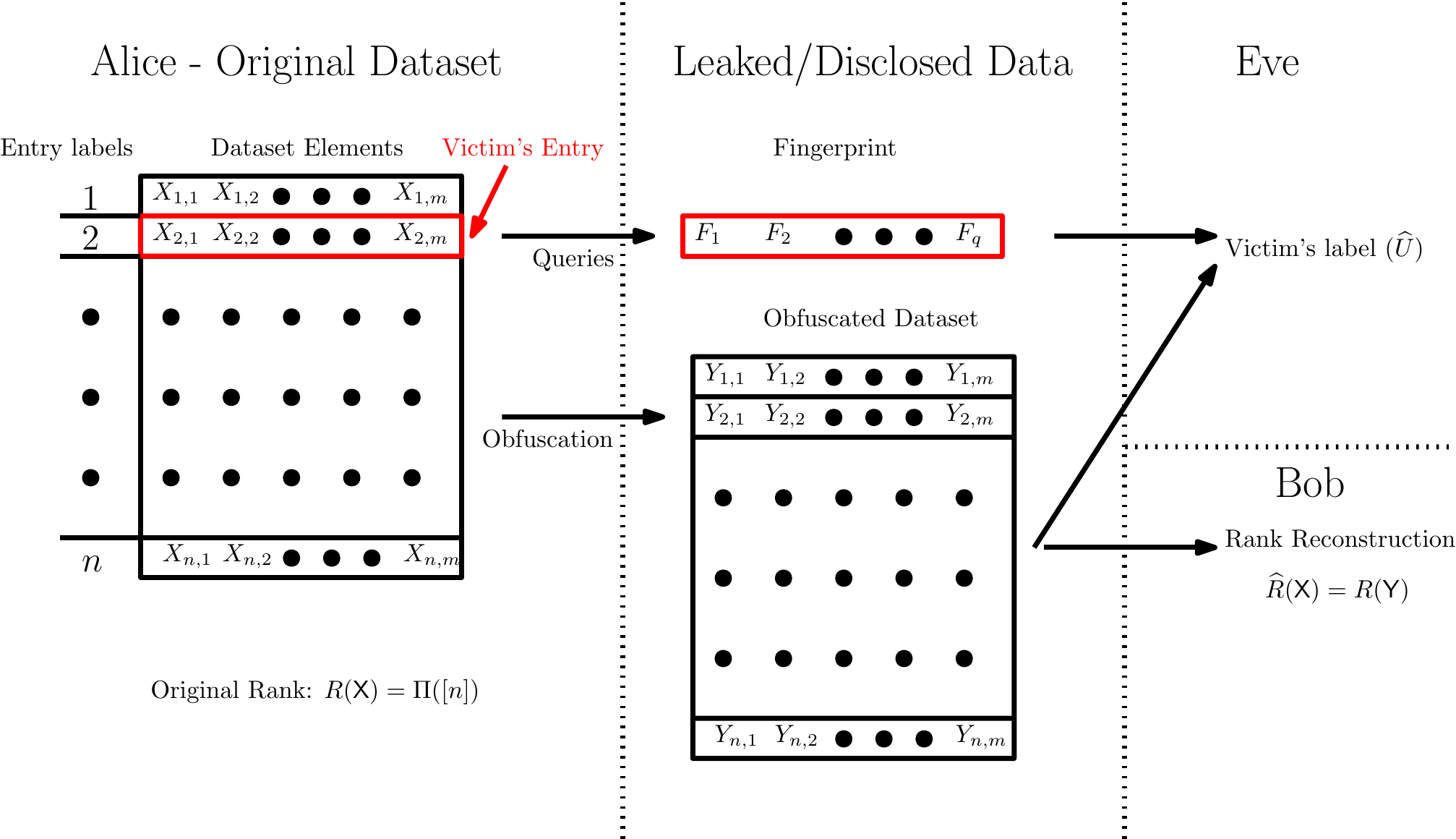}
 \caption{The dataset obfuscation setup.}
 \label{fig:2}
\end{figure}





\noindent \textbf{k-Letter Obfuscation Strategy:} As mentioned in the introduction, a widely used obfuscation method is to perturb each dataset entry independently by passing them through identical test-channels \cite{takbiri2018matching, shi2016privacy, zurbaran2015near, wightman2011evaluation}. We call such mechanisms  {single-letter obfuscation mechanisms}. One justification for their use is that in applications such as search engines and recommendation systems, standard ranking algorithms such as PageRank \cite{chung2014brief} require both an accurate estimation of the degree-based rank function and a small $\ell_1$ distance between the original dataset and the obfuscated dataset for reliable performance. Single-letter obfuscation mechanisms facilitate analyzing and controlling the $\ell_1$ distance between the two datasets by appropriate choice of the underlying obfuscating test-channel. A k-letter obfuscation strategy is a generalization of single-letter strategies, where randomly partitioned subsets of size $k$ of elements of each entry are passed through $k$-letter test-channels for obfuscation. 
The following formally defines a k-letter obfuscation strategy. 
\begin{Definition}[\textbf{k-letter Obfuscation Strategy}]
\label{def:k_lett}
For a random dataset $\mathsf{X}=[X_{i,j}]_{i\in [n],j\in [m]}$ parametrized by $(n,m,\mathcal{X},P_X)$, a k-letter obfuscation strategy  is parametrized by the conditional distribution $P_{Y^k|X^k}$. The obfuscated dataset $\mathsf{Y}=[Y_{i,j}]_{i\in [n],j\in [m]}$ is produced as follows:\footnote{For ease of notation, we have assumed that $m$ is divisible by $k$.}
\begin{align*}
    P_{\mathsf{Y}|\mathsf{X}}(\mathsf{y}|\mathsf{x})= \prod_{i\in [n]}\prod_{\ell\in [\frac{m}{k}]} P_{Y^k|X^k}((y_{i,j})_{j\in \mathcal{P}_\ell}| (x_{i,j})_{j\in \mathcal{P}_\ell}),
\end{align*}
where $\mathsf{x}=[x_{i,j}]_{i\in [n],j\in [m]}$, $\mathsf{y}=[y_{i,j}]_{i\in [n],j\in [m]}$ and
$(\mathcal{P}_{\ell})_{\ell\in [\frac{m}{k}]}$ is a randomly and uniformly chosen partition of $[m]$ into $\frac{m}{k}$ subsets of size equal to $k$. 
\end{Definition}

The dataset obfuscation problem is formally defined in the following. 
\begin{Definition}[\textbf{k-Letter Dataset Obfuscation Problem}]
\label{def:problem}
Given a random dataset parametrized by $(n,m,\mathcal{X},P_X)$, fingerprint length $q\in \mathbb{N}$, query noise distribution $P_{F|X}$, and $\epsilon>0$, the k-letter dataset obfuscation problem is to characterize the $\epsilon$-optimal k-letter strategy $P^*_{Y^k|X^k}$, defined as
\begin{align*}
    &P^*_{Y^k|X^k}\triangleq \argmin_{P_{Y^k|X^k}:\frac{1}{q}I(U; F^q,\mathsf{Y})<\epsilon} \quad \mathbb{E}(d_{KRCC}(R_d, \widehat{R}_d)),
\end{align*}
where $R$ and $\widehat{R}$ are the degree-based rank functions associated with $\mathsf{X}$ and $\mathsf{Y}$, respectively, and $U$ is uniformly distributed over $[n]$. The set of all pairs $(\epsilon,\delta)$ for which there exists $P_{Y^k|X^k}$ such that $\mathbb{E}(d_{KRCC}(R, \widehat{R}))<\delta$ and $\frac{1}{q}I(U; F^q,\mathsf{Y})<\epsilon$ is called the \textit{achievable privacy-utility set} and is denoted by $\mathcal{R}(k,n,m,q,\mathcal{X},P_X,P_{F|X})$.
\end{Definition}

In the rest of the paper, for brevity, we denote the achievable privacy-utility region by  $\mathcal{R}(n,m,P_X,P_{F|X})$ when the values of $k,q$ and $\mathcal{X}$ are clear from the context.

\section{Characterizing the Privacy-Utility Tradeoff}
\label{sec:th}
In this section, we consider single-letter obfuscation mechanisms and evaluate their fundamental performance limits, in terms of the utility-privacy tradeoff measured with respect to KRCC utility metric and information leakage privacy metric described in the previous section. The analysis can also be extended to finite-letter obfuscation mechanism using similar techniques. For ease of explanation, the main theorems are provided for binary alphabet datasets. 

\label{sec:1}
Recall that given a dataset $\mathsf{X}$, parametrized by $(n,m,\mathcal{X},P_X)$, and a conditional distribution $P_{Y|X}$, a single-letter obfuscation mechanism produces the obfuscated dataset $\mathsf{Y}$ conditioned on  $\mathsf{X}$ by passing each element of $\mathsf{X}$ through independent and statistically identical  test-channels characterized by $P_{Y|X}$. In order to provide our main results, we first introduce the following notation. Given  joint distribution $P_{X,Y}=P_XP_{Y|X}$ on pairs of binary variables $(X,Y)$, we define $Q(P_X,P_{Y|X})\triangleq F_{N_1,N_2}(0,0)$, where $F_{N_1,N_2}(\cdot,\cdot)$ is the cumulative distribution function (CDF) of zero-mean jointly Gaussian variables $N_1$ and $N_2$ with covariance matrix $\Sigma\triangleq [\sigma_{i,j}]_{i,j\in \{1,2\}}$ given by
 \begin{align}
    & \sigma_{1,1}\triangleq 2P_X(0)P_X(1), \qquad \sigma_{2,2}\triangleq 2P_Y(0)P_Y(1), \qquad \label{eq:gaus}
     \sigma_{2,1}=\sigma_{1,2}\triangleq 2P_X(1)(P_Y(1)- P_{Y|X}(1|1)).
\end{align}   
The following provides one of the main results of the paper.
\begin{Theorem}
\label{th:1}
Let $q,n,m\in \mathbb{N}, \mathcal{X}=\mathcal{F}=\{0,1\}$, $P_X$ be a probability measure on $\mathcal{X}$, and $P_{F|X}$ a collection of probability measures on $\mathcal{F}$. Then,
there exists $b>0$ such that:
\begin{align*}
&\mathcal{R}_{in}(n,m,P_X,P_{F|X})\subseteq\mathcal{R}(n,m,P_X,P_{F|X})\subseteq\mathcal{R}_{out}(n,m,P_X,P_{F|X}),
\end{align*}
where
\begin{align*}
&\mathcal{R}_{in}(n,m,P_X,P_{F|X})\triangleq
\bigcup_{P_{Y|X}}
\Big\{(\epsilon,\delta)| \epsilon\geq I(P_Y;P_{Y|F})+\!\zeta\!+b\frac{\log^{\frac{3}{2}}m}{\sqrt{m}}, 
\delta \geq Q(P_{X},P_{Y|X})+\frac{(42\sqrt[4]{2}+16)}{\sqrt{m}}\theta\gamma
\Big\},
\\
&\mathcal{R}_{out}(n,m,P_X,P_{F|X})\triangleq
\bigcup_{P_{Y|X}}
\Big\{(\epsilon,\delta)| \epsilon\geq I(P_Y;P_{Y|F}), 
\delta \geq Q(P_{X},P_{Y|X})-\frac{(42\sqrt[4]{2}+16)}{\sqrt{m}}\theta\gamma
\Big\},
\\& \theta\triangleq \frac{4}{\sqrt{\lambda^*}},\qquad 
\lambda^*\triangleq \min\{\sigma_{1,1}-|\sigma_{1,2}|, \sigma_{2,2}-|\sigma_{2,1}|\},
\\&\gamma\triangleq  2P(X=Y)P(X\neq Y)+
2^{\frac{5}{2}}(P_{X,Y}(0,0)P_{X,Y}(1,1)+P_{X,Y}(0,1)P_{X,Y}(1,0)),
\\&\zeta\triangleq \max(\max_{P_Y}(I(P_Y;P_{Y|F})-\frac{\log{n}}{q}),0),
\end{align*}
 the mutual information $I(P_Y;P_{Y|F})$ is evaluated with respect to $P_{Y,F}$ induced by the Markov chain $Y\leftrightarrow X\leftrightarrow F$, and the union is over all probability distributions $P_{Y|X}$. 
Particularly, for asymptotically large datasets, we have:
\begin{align*}
   & \!\!\lim_{m\to \infty}\!\mathcal{R}(n,m,\!P_X\!,\!P_{F|X})\!=\!
 \bigcup_{P_{Y|X}}  \{(\epsilon,\delta)| \epsilon\!\geq \!I(P_Y;P_{Y|F}), \delta\! \geq \! Q(P_{X},P_{Y|X})\}.
\end{align*}
\end{Theorem}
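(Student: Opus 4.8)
The plan is to establish the finite-$m$ sandwich $\mathcal{R}_{in}\subseteq\mathcal{R}\subseteq\mathcal{R}_{out}$ and then let $m\to\infty$: the distortion gap $O(m^{-1/2})$ closes, and so does the leakage gap $\zeta+O(m^{-1/2}\log^{3/2}m)$ in the regime where $\zeta=0$ (i.e.\ $\log n\ge q\max_{P_Y}I(P_Y;P_{Y|F})$, so that the $n-1$ decoy rows genuinely hide the victim). Both inclusions are handled termwise over the union in $P_{Y|X}$, and for each fixed single-letter channel $P_{Y|X}$ they split into two independent tasks — a \emph{utility estimate} for $\mathbb{E}(d_{KRCC}(R_d,\widehat R_d))$ and a \emph{privacy estimate} for $\tfrac1q I(U;F^q,\mathsf{Y})$.

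\emph{Utility estimate.} I would observe that, for distinct rows $i\ne j$, a KRCC disagreement on the ordered pair $(i,j)$ is exactly the event $\{D_X(i)>D_X(j)\}\cap\{D_Y(i)<D_Y(j)\}$ with $D_X(i)=\sum_\ell X_{i,\ell}$ and $D_Y(i)=\sum_\ell Y_{i,\ell}$. Put $W_\ell=\bigl(-(X_{i,\ell}-X_{j,\ell}),\,Y_{i,\ell}-Y_{j,\ell}\bigr)\in\mathbb{R}^2$: the $W_\ell$ are i.i.d., zero-mean, bounded, and a short computation gives $\mathrm{Cov}(W_1)=\Sigma$ exactly (the sign flip on the first coordinate is what turns $-2\,\mathrm{Cov}(X,Y)$ into $\sigma_{1,2}$ of \eqref{eq:gaus}), so the disagreement event is $\{\sum_\ell W_\ell\in(-\infty,0)^2\}$. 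Since the rows are i.i.d.\ all $n(n-1)$ ordered pairs carry the same disagreement probability, and $\mathbb{E}(d_{KRCC})$ equals it. Applying a multivariate Berry--Esseen bound with explicit constants over orthants (the constant $42\sqrt[4]{2}+16$, the factor $\theta=4/\sqrt{\lambda^*}$ controlling $\Sigma^{-1/2}$, and $\gamma$ bundling the third absolute moment of $W_1$ are precisely those of such a theorem, which needs the nondegeneracy $\lambda^*>0$) to $\Sigma^{-1/2}\sum_\ell W_\ell/\sqrt m$ and the orthant $(-\infty,0]^2$ yields
\[
\bigl|\mathbb{E}(d_{KRCC}(R_d,\widehat R_d))-Q(P_X,P_{Y|X})\bigr|\;\le\;\frac{42\sqrt[4]{2}+16}{\sqrt m}\,\theta\gamma ,
\]
the boundary/tie correction $P(D_X(i)=D_X(j))=O(m^{-1/2})$ being already inside the error. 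This two-sided bound gives simultaneously the lower bound needed for $\mathcal{R}_{out}$ and the upper bound needed for $\mathcal{R}_{in}$.

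\emph{Privacy estimate.} Since $U$ is uniform on $[n]$ and independent of $(\mathsf{X},\mathsf{Y})$, $I(U;\mathsf{Y})=0$, hence $I(U;F^q,\mathsf{Y})=I(F^q;U\mid\mathsf{Y})=I(F^q;U,\mathsf{Y})-I(F^q;\mathsf{Y})$; using $I(F^q;U)=0$ and the fact that, given $U=u$, the fingerprint touches $\mathsf{Y}$ only through the $q$ relevant entries of row $u$, the product structure of the obfuscation and query channels makes $I(F^q;U,\mathsf{Y})=I(F^q;\mathsf{Y}\mid U)=q\,I(P_Y;P_{Y|F})$, where $(Y,F)$ follows $Y\!\leftrightarrow\!X\!\leftrightarrow\!F$. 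Thus, exactly,
\[
\tfrac1q\,I(U;F^q,\mathsf{Y})\;=\;I(P_Y;P_{Y|F})\;-\;\tfrac1q\,I(F^q;\mathsf{Y}) .
\]
For the outer inclusion, $I(F^q;\mathsf{Y})\ge0$ gives $\tfrac1q I(U;F^q,\mathsf{Y})\le I(P_Y;P_{Y|F})$, and combining this with the trivial bound $\tfrac1q I(U;F^q,\mathsf{Y})\le\tfrac1q H(U)=\tfrac{\log n}{q}$ produces the $\zeta$ slack in $\mathcal{R}_{in}$. For the reverse direction (needed to finish $\mathcal{R}_{in}$, and for the converse of the limit) I would bound the residual ``decoy-confusion'' term $\tfrac1q I(F^q;\mathsf{Y})$: conditioned on $\mathsf{Y}$ the induced channel $u\mapsto F^q$ has codewords $\bigl(P_{F|Y}(\cdot\mid Y_{u,i_1}),\dots,P_{F|Y}(\cdot\mid Y_{u,i_q})\bigr)$, and when $\log n$ exceeds their resolvability $q\,I(P_Y;P_{Y|F})$ the posterior of $U$ stays spread out; quantifying this through a large-deviation estimate on the matching log-likelihood statistic — refined by a Berry--Esseen correction, which is where $m^{-1/2}\log^{3/2}m$ comes from — shows $\tfrac1q I(F^q;\mathsf{Y})\le b\,m^{-1/2}\log^{3/2}m$, hence $\tfrac1q I(U;F^q,\mathsf{Y})\ge I(P_Y;P_{Y|F})-b\,m^{-1/2}\log^{3/2}m$.

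\emph{Assembly and obstacle.} For a fixed $P_{Y|X}$, any $(\epsilon,\delta)$ in its $\mathcal{R}_{in}$-slice then satisfies $\mathbb{E}(d_{KRCC})\le\delta$ and $\tfrac1q I(U;F^q,\mathsf{Y})\le\epsilon$, so it lies in $\mathcal{R}$; conversely an achievable $(\epsilon,\delta)$ is realized by some $P_{Y|X}$ for which the lower estimates place it in that channel's $\mathcal{R}_{out}$-slice (up to the $O(m^{-1/2}\log^{3/2}m)$ term, which vanishes in the limit). Letting $m\to\infty$ collapses both $\mathcal{R}_{in}$ and $\mathcal{R}_{out}$ onto $\bigcup_{P_{Y|X}}\{(\epsilon,\delta):\epsilon\ge I(P_Y;P_{Y|F}),\ \delta\ge Q(P_X,P_{Y|X})\}$. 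The main obstacle is the privacy side: while the identity $\tfrac1q I(U;F^q,\mathsf{Y})=I(P_Y;P_{Y|F})-\tfrac1q I(F^q;\mathsf{Y})$ is elementary, controlling the decoy-confusion term with an explicit $m$-rate amounts to pinning down the error exponent with which Eve can (or cannot) localize $U$ among the $n$ candidates, and it is this large-deviation analysis — together with its Berry--Esseen sharpening — that dictates the form of the residual terms and the role of $\zeta$; by contrast, the utility side is a careful but essentially routine application of a quantitative multivariate CLT.
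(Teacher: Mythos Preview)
Your utility analysis matches the paper's almost verbatim: the same reduction by exchangeability to a single ordered pair, the same two-dimensional Berry--Esseen theorem of Rai\v{c}, the same whitening by $\Sigma^{-1/2}$ with its Frobenius norm controlled via Gershgorin's circle theorem, yielding exactly the constants $\theta$ and $\gamma$.

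On the privacy side you obtain the same decomposition
\[
\tfrac1q\,I(U;F^q,\mathsf{Y})\;=\;I(P_Y;P_{Y|F})\;-\;\tfrac1q\,I(F^q;\mathsf{Y})
\]
that the paper derives. The difference lies in how the residual $I(F^q;\mathsf{Y})$ is handled. The paper does not build a bespoke large-deviation/Berry--Esseen argument on log-likelihood statistics; instead it observes that the rows of $\mathsf{Y}$ form an i.i.d.\ random codebook and invokes, as a black box, Polyanskiy's result on the empirical output distribution of good channel codes (Theorem~7 of \emph{Empirical distribution of good channel codes with non-vanishing error probability}), which directly supplies the bound $\zeta + b\,m^{-1/2}\log^{3/2}m$. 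Your proposed route is morally the same --- Polyanskiy's proof rests on precisely such concentration arguments --- but you would be re-deriving an existing theorem rather than citing it, and your sketch does not explain why the relevant blocklength is $m$ rather than $q$.

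One bookkeeping point to correct: you have the roles of the two privacy bounds reversed. The trivial inequality $I(F^q;\mathsf{Y})\ge 0$ gives the \emph{upper} bound $\tfrac1q I(U;F^q,\mathsf{Y})\le I(P_Y;P_{Y|F})$, and that is what the \emph{inner} (achievability) inclusion $\mathcal{R}_{in}\subseteq\mathcal{R}$ requires; the Polyanskiy-type upper bound on $I(F^q;\mathsf{Y})$ gives the \emph{lower} bound on $\tfrac1q I(U;F^q,\mathsf{Y})$, which is what the \emph{outer} (converse) inclusion needs. The paper's own write-up has some sign slippage in exactly this passage, so the confusion is understandable, but the structural role of each bound is as just stated.
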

\begin{proof}
Please refer to Appendix \ref{A:1}.
\end{proof}
Theorem \ref{th:1} provides the achievable utility-privacy  region as a union of achievable regions for each obfuscating test-channel $P_{Y|X}$. A relevant problem of interest is to find the optimal test channel $P^\epsilon_{Y|X}$ minimizing the utility cost $\delta$ given a privacy cost $\epsilon$. The following theorem provides a characterization of $P^\epsilon_{Y|X}$ in the form of a computable convex optimization problem for asymptotically large datasets, i.e., for $m\to \infty$.

\begin{Theorem}
\label{th:2}
    Let $q,n,m\!\in\! \mathbb{N}, \mathcal{X}\!=\!\mathcal{F}\!=\!\{0,1\}$, $P_X$ be a probability measure on $\mathcal{X}$, and $P_{F|X}$ be a collection of probability measures on $\mathcal{F}$, such that $
        \max_{P_Y} I(P_Y;P_{Y|F})\leq \frac{\log{n}}{q}$.  Define:
    \begin{align*}
        P^\epsilon_{Y|X}\triangleq \argmin_{P_{Y|X}:I(P_Y;P_{Y|F})\leq \epsilon} \{ \delta| (\epsilon,\delta)\in \lim_{m\to \infty}\mathcal{R}(n,m,P_X,P_{F|X})\}, \epsilon>0.
    \end{align*}
Then, 
    \begin{align}
        P^\epsilon_{Y|X}=
     &\argmin_{P_{Y|X}:I(P_Y,P_{Y|F})=\epsilon}   \quad \frac{\text{Cov}(N_1,N_2)}{\sqrt{\text{Var}(N_1)\text{Var}(N_2)}},
                \label{eq:th2}
    \end{align}
    where $N_1$ and $N_2$ are zero-mean jointly Gaussian variables with covariance matrix given in Equation \eqref{eq:gaus}.
\end{Theorem}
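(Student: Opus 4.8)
\emph{Proof proposal.} The plan is to (i) reduce the claim, via Theorem~\ref{th:1}, to a statement about the orthant probability $Q(P_X,P_{Y|X})$; (ii) evaluate $Q$ in closed form and monotonize; and (iii) show the privacy constraint is tight at the optimum. For step (i): by the limiting characterization in Theorem~\ref{th:1} --- the hypothesis $\max_{P_Y}I(P_Y;P_{Y|F})\le\frac{\log n}{q}$ being exactly what forces $\zeta=0$, so that $\mathcal R_{in}$ and $\mathcal R_{out}$ agree as $m\to\infty$ --- we have $\lim_{m\to\infty}\mathcal R(n,m,P_X,P_{F|X})=\bigcup_{P_{Y|X}}\{(\epsilon,\delta):\epsilon\ge I(P_Y;P_{Y|F}),\ \delta\ge Q(P_X,P_{Y|X})\}$. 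Hence for fixed $\epsilon>0$, $\min\{\delta:(\epsilon,\delta)\in\lim_m\mathcal R\}=\min\{Q(P_X,P_{Y|X}):I(P_Y;P_{Y|F})\le\epsilon\}$, and $P^\epsilon_{Y|X}$ is a channel attaining this minimum. So it is enough to prove that the minimizers of $Q(P_X,\cdot)$ over $\{I(P_Y;P_{Y|F})\le\epsilon\}$ are exactly the minimizers of $\frac{\text{Cov}(N_1,N_2)}{\sqrt{\text{Var}(N_1)\text{Var}(N_2)}}$ over $\{I(P_Y;P_{Y|F})=\epsilon\}$.

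For step (ii): since $N_1,N_2$ are zero-mean jointly Gaussian with covariance $\Sigma$ from \eqref{eq:gaus}, the classical bivariate-normal orthant-probability identity (Sheppard's formula) gives $Q(P_X,P_{Y|X})=F_{N_1,N_2}(0,0)=\frac14+\frac{1}{2\pi}\arcsin\rho$, where $\rho=\frac{\sigma_{1,2}}{\sqrt{\sigma_{1,1}\sigma_{2,2}}}=\frac{\text{Cov}(N_1,N_2)}{\sqrt{\text{Var}(N_1)\text{Var}(N_2)}}$ (degenerate channels, where $\sigma_{2,2}=0$, give $Q=\tfrac12$). Since $g(\rho)\triangleq\frac14+\frac{1}{2\pi}\arcsin\rho$ is continuous and strictly increasing on $[-1,1]$, minimizing $Q(P_X,\cdot)$ over any family of channels is equivalent to minimizing $\rho$ over it; in particular $\argmin_{\{I\le\epsilon\}}Q=\argmin_{\{I\le\epsilon\}}\rho$. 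This identifies the objective of \eqref{eq:th2}; what remains is to turn the inequality constraint into the equality constraint.

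For step (iii): write $p=P_X(1)$, $a=P_{Y|X}(1|0)$, $b=P_{Y|X}(1|1)$, $r=P_Y(1)=(1-p)a+pb$, so that $\rho=(a-b)\sqrt{p(1-p)/(r(1-r))}$. The channel simplex $\{(a,b)\in[0,1]^2\}$ is compact and $Q$ (hence $\rho$) continuous away from the degenerate faces, so a minimizer $P^*$ of $Q$ over $\{I(P_Y;P_{Y|F})\le\epsilon\}$ exists; degenerate channels have $Q=\tfrac12$ and are never optimal, and one checks $P^*$ has $a<b$ (a small perturbation of the independent channel $a=b$ is feasible with $\rho<0$). The crux is to show that at any non-identity channel with $a<b$ the directional derivative of $\rho$ toward the identity channel $(a,b)=(0,1)$ (i.e. toward no obfuscation, where $\rho=-1$ is globally minimal and $I(P_Y;P_{Y|F})=I(X;F)=\max_{P_{Y|X}}I(Y;F)$ by the data-processing inequality) is strictly negative; a short computation reduces this to the elementary inequality $(1-b+a)+\frac{(b-a)(1-2r)(r-p)}{2r(1-r)}>0$. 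Granting this: if $\epsilon<I(X;F)$ (and only $\epsilon\le I(X;F)$ makes $\{I=\epsilon\}$ nonempty), then $P^*$ is not the identity channel, so if $I(P_Y;P_{Y|F})|_{P^*}<\epsilon$ we may move infinitesimally from $P^*$ toward the identity, staying feasible by continuity of $I$ while strictly decreasing $Q$ --- contradiction; and $\epsilon=I(X;F)$ is immediate since the identity channel is then feasible with $I(P_Y;P_{Y|F})=\epsilon$. Thus $I(P_Y;P_{Y|F})|_{P^*}=\epsilon$, which with step (ii) gives $P^\epsilon_{Y|X}=\argmin_{\{I=\epsilon\}}\rho$.

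The main obstacle is step (iii): ruling out slack in the privacy constraint at the optimum. I would handle it via the directional-derivative inequality above --- equivalently, by proving that the tradeoff curve $\epsilon\mapsto\min_{\{I\le\epsilon\}}Q$ is strictly decreasing on $(0,I(X;F))$, or that $\rho$ has no local minimum over the channel simplex other than the identity channel (no interior critical point --- the two stationarity equations for $\rho$ are inconsistent --- and $\rho$ monotone along each edge toward $(0,1)$). Mild extra care is needed at the degenerate channels $P_Y(0)P_Y(1)=0$, where $Q$ jumps up to $\tfrac12$; since $\tfrac12$ is the global maximum of $Q$, such channels are irrelevant to the minimization and the argument is unaffected.
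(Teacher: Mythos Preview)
Your proposal is correct and follows the same three-step skeleton as the paper --- reduce via Theorem~\ref{th:1}, show $Q$ is monotone in the correlation $\rho$, then show the privacy constraint binds --- but two of the steps are executed differently.

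For step (ii), you invoke Sheppard's orthant-probability formula $F_{N_1,N_2}(0,0)=\tfrac14+\tfrac{1}{2\pi}\arcsin\rho$ and use strict monotonicity of $\arcsin$. The paper instead normalizes $(N_1,N_2)$ to unit variances, diagonalizes the resulting covariance matrix explicitly (eigenvalues $p',1-p'$ with $p'=\tfrac12-\tfrac{\rho}{2}$), and observes that the rotated integration region shrinks monotonically in $p'$. Your route is shorter and more transparent; the paper's route avoids appealing to a named identity but is heavier.

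For step (iii), the paper parametrizes by $p_1=P_{Y|X}(1|0),\,p_2=P_{Y|X}(0|1)$, restricts to $p_1+p_2<1$, and shows by direct differentiation of each factor that $\rho$ is \emph{coordinate-wise} increasing in $p_1$ (for fixed $p_2$) and in $p_2$ (for fixed $p_1$); hence from any interior feasible point one can decrease $p_1$ or $p_2$ slightly, stay feasible by continuity, and strictly decrease $\rho$. Your argument instead takes a single directional derivative toward the identity channel and reduces to the scalar inequality $(1-b+a)+\frac{(b-a)(1-2r)(r-p)}{2r(1-r)}>0$, which you leave unverified. Both arguments work, but the paper's coordinate-wise computation is cleaner to close: the two factor derivatives come out as $\frac{p_2}{(\cdot)^2}$ and $\frac{\overline{p}_2}{(\cdot)^2}$, manifestly nonnegative, with no residual inequality to check. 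If you want to complete your version, your inequality does hold (the first term dominates), but you could equally switch to the coordinate-wise argument and avoid it entirely.
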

\begin{proof}
    Please refer to Appendix \ref{A:3}.
\end{proof}
The following follows from the proof of Theorem \ref{th:2}.
\begin{Corollary}
\label{cor:1}
The optimal obfuscating test-channel in Equation \eqref{eq:th2} can be computed through the following optimization:
    \begin{align}
        P^\epsilon_{Y|X}=
&\!\!\!\!\!\argmax_{\substack{p_1,p_2:I(P_Y,P_{Y|F})=\epsilon\\{p_1+p_2\leq 1}}} \!   \frac{(p_1+p_2-1)^2}{(\overline{p}_1P_X(0)\!+\!p_2P_X(1))(p_1P_X(0)\!+\!\overline{p}_2P_X(1))},
     \label{eq:th:3}
    \end{align}
    where $\overline{p}_i\triangleq 1-p_i, i=1,2$, and $P_{Y|X}(1|0)=p_1, P_{Y|X}(0|1)=p_2$. 
\end{Corollary}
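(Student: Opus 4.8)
The plan is to specialize the characterization of Theorem~\ref{th:2} to the two-parameter family of binary test-channels and then simplify using a relabelling symmetry. Write $p_1\triangleq P_{Y|X}(1|0)$ and $p_2\triangleq P_{Y|X}(0|1)$, so that $P_{Y|X}(1|1)=\overline{p}_2$, the obfuscated marginal is $P_Y(1)=p_1P_X(0)+\overline{p}_2P_X(1)$ and $P_Y(0)=\overline{p}_1P_X(0)+p_2P_X(1)$, and let $\rho$ denote the objective $\text{Cov}(N_1,N_2)/\sqrt{\text{Var}(N_1)\text{Var}(N_2)}$ appearing in \eqref{eq:th2}. Substituting the above into \eqref{eq:gaus}, and using the identity $P_Y(1)-\overline{p}_2=P_X(0)(p_1+p_2-1)$, gives $\text{Var}(N_1)=2P_X(0)P_X(1)$ (independent of the channel), $\text{Var}(N_2)=2P_Y(0)P_Y(1)$, and $\text{Cov}(N_1,N_2)=2P_X(0)P_X(1)(p_1+p_2-1)$, whence
\[
\rho=\sqrt{P_X(0)P_X(1)}\;\frac{p_1+p_2-1}{\sqrt{(\overline{p}_1P_X(0)+p_2P_X(1))(p_1P_X(0)+\overline{p}_2P_X(1))}}.
\]
In particular $\rho^{2}$ equals $P_X(0)P_X(1)$, a positive channel-independent constant, times the ratio displayed in \eqref{eq:th:3}, and the sign of $\rho$ is the sign of $p_1+p_2-1$.

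It then remains to replace the minimization of the signed quantity $\rho$ in \eqref{eq:th2} by a maximization of the (nonnegative) ratio in \eqref{eq:th:3} over the half $\{p_1+p_2\le 1\}$. For this I would use that the feasible set $\{P_{Y|X}:I(P_Y;P_{Y|F})=\epsilon\}$ is invariant under relabelling the obfuscated symbol, i.e. under the map $(p_1,p_2)\mapsto(\overline{p}_1,\overline{p}_2)$: mutual information is unchanged by a bijective relabelling of $Y$, while, from the formulas above, this relabelling flips the sign of $\text{Cov}(N_1,N_2)$ and hence of $\rho$. Consequently $\min_{\text{feasible}}\rho\le 0$ and every minimizer lies in the branch $\{p_1+p_2\le 1\}$, since any feasible channel with $p_1+p_2>1$ has $\rho>0$ and is beaten by its relabelling. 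On that branch $p_1+p_2-1\le 0$, so $\rho=-\sqrt{P_X(0)P_X(1)}\,(1-p_1-p_2)/\sqrt{P_Y(0)P_Y(1)}\le 0$ is a strictly decreasing function of the ratio in \eqref{eq:th:3}; hence minimizing $\rho$ over the feasible set coincides with maximizing that ratio over the feasible channels with $p_1+p_2\le 1$, which is exactly the optimization \eqref{eq:th:3}. The hypothesis $\max_{P_Y}I(P_Y;P_{Y|F})\le \log n/q$ and the equality form of the leakage constraint are already incorporated in Theorem~\ref{th:2}, so nothing further is needed on those points.

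The only slightly delicate point is the sign bookkeeping of the second paragraph — namely, justifying that passing to the half $\{p_1+p_2\le 1\}$ is without loss of generality, and that the resulting monotone, sign-reversing correspondence between $\rho$ and the ratio in \eqref{eq:th:3} is the correct one. Both are supplied by the relabelling symmetry of the feasible set together with the explicit formula for $\rho$ above; the rest is a direct substitution of the binary-channel marginals into \eqref{eq:gaus} and \eqref{eq:th2}.
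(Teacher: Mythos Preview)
Your proposal is correct and follows the paper's derivation almost exactly: both substitute the binary-channel parameters $(p_1,p_2)$ into \eqref{eq:gaus} to compute $\rho=\text{Cov}(N_1,N_2)/\sqrt{\text{Var}(N_1)\text{Var}(N_2)}$ explicitly, restrict to the half $\{p_1+p_2\le 1\}$ where $\rho\le 0$, and then convert the minimization of $\rho$ in \eqref{eq:th2} into the maximization of $\rho^2$, i.e., of the ratio in \eqref{eq:th:3}. The only substantive difference is in how the restriction to $p_1+p_2\le 1$ is justified. The paper simply notes that for $p_1+p_2>1$ one has $\rho>0$, which cannot be a minimizer since nonpositive values of $\rho$ are feasible; you instead invoke the relabelling symmetry $(p_1,p_2)\mapsto(\overline{p}_1,\overline{p}_2)$, which preserves $I(P_Y;P_{Y|F})$ while flipping the sign of $\rho$. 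Your symmetry argument is self-contained under the equality constraint already present in Theorem~\ref{th:2} and is arguably cleaner; the paper's version is shorter but needs an auxiliary feasible point with $\rho\le 0$. Either route yields the same conclusion.
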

We show that the objective function in the optimization in Equation \eqref{eq:th:3} is convex. To see this, let us define $a\triangleq p_1+p_2$ and $b\triangleq 2P_X(0)p_1-2P_X(1)p_2+1-2P_X(0)$. Then, the objective function can be written as 
\begin{align*}
    g(a,b)\triangleq \frac{4(a-1)^2}{(1-b)(1+b)}= \frac{4(a-1)^2}{1-b^2}, a\in [0,1], b\in [-1,1].
\end{align*}
The Hessian matrix of second partial derivatives of $\frac{1}{4}g(a,b)$ is given as:
\begin{align*}
  \mathcal{H}=  \begin{bmatrix}
        \frac{\partial^2}{\partial a^2}\frac{1}{4}g(a,b)&  \frac{\partial^2}{\partial a\partial b}\frac{1}{4}g(a,b)\\
        \frac{\partial^2}{\partial a\partial b}\frac{1}{4}g(a,b)& \frac{\partial^2}{\partial b^2}\frac{1}{4}g(a,b)
    \end{bmatrix}=
\begin{bmatrix}
     \frac{2}{1-b^2}&  \frac{4(a-1)b}{(1-b^2)^2}\\
         \frac{4(a-1)b}{(1-b^2)^2}& \frac{2(a-1)^2(1+3b^2)}{(1-b^2)^3}
    \end{bmatrix}
\end{align*}
We have:
\begin{align*}
    det(\mathcal{H})= \frac{4(a-1)^2(1-b^2)}{(1-b^2)^4}\geq 0.
\end{align*}
So, $g(a,b)$ is a convex function, and since $(a,b)$ are a linear transformation of $(p_1,p_2)$, the objective function in Equation \eqref{eq:th:3} is convex in $(p_1,p_2)$. This optimization problem can in general be solved efficiently using numerical methods. There are special cases where exact analytical solution can be derived. For instance, the following corollary characterizes the optimal test-channel if the query noise $P_{F|X}$ is a binary symmetric channel with transition probability $q$ (BSC(q)), and the choice of the obfuscating test-channel is restricted to BSC test-channels, i.e. $P_{Y|X}(0|1)=P_{X|Y}(1|0)=p, p\in [0,1]$.

\begin{Corollary}
\label{cor:2}
   Assume that $P_{F|Y}$ is a BSC(q) channel, where $q\in [0,\frac{1}{2}]$, and the choice of obfuscating test-channel $P_{Y|X}$ is restricted to BSC test channels. Then, given $\epsilon>0$ the optimizing obfuscating test-channel, minimizing the KRCC cost is the BSC parametrized by 
   \[p= \frac{h^{-1}_b(1-\epsilon)-q}{1-2q},\]
   where $h^{-1}_b(\cdot)$ is the inverse of the binary entropy function defined as $h_b(x)=-x\log_2(x)-(1-x)\log_2(1-x),x\in [0,\frac{1}{2}]$.
\end{Corollary}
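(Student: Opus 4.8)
The plan is to specialize the convex program of Corollary~\ref{cor:1} --- equivalently, the Gaussian correlation-coefficient minimization of Theorem~\ref{th:2} --- to the sub-class of symmetric (BSC) obfuscating test-channels and to show that the privacy constraint alone pins down a unique admissible crossover probability. Setting $p_1=p_2=p$ in the optimization \eqref{eq:th:3} is exactly the restriction to a BSC$(p)$ test-channel. Writing $P_Y(0)=\overline{p}P_X(0)+pP_X(1)$ and $P_Y(1)=pP_X(0)+\overline{p}P_X(1)$, the objective in \eqref{eq:th:3} collapses to $(2p-1)^2\,\tfrac{P_X(0)P_X(1)}{P_Y(0)P_Y(1)}$, which for a uniform source (as the corollary's identity $P_{X|Y}(1|0)=p$ already presumes) equals $(2p-1)^2$; importantly, the side constraint $p_1+p_2\le 1$ becomes $p\le\tfrac12$. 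Thus the restricted problem is: maximize $(2p-1)^2$ over $p\in[0,\tfrac12]$ subject to $I(P_Y;P_{Y|F})=\epsilon$.

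The crux is evaluating the privacy constraint for BSC channels. Since $Y\leftrightarrow X\leftrightarrow F$ and, for a uniform source, the reverse test-channel $P_{X|Y}$ is again a BSC$(p)$, the induced channel $P_{F|Y}$ is the concatenation of BSC$(p)$ with the query-noise BSC$(q)$, i.e.\ a BSC with crossover $p\star q\triangleq p(1-2q)+q$; as $P_Y$ is uniform, $I(P_Y;P_{Y|F})=1-h_b(p\star q)$. The equation $1-h_b\big(p(1-2q)+q\big)=\epsilon$ then forces $p(1-2q)+q=h_b^{-1}(1-\epsilon)$. On $p\in[0,\tfrac12]$ the affine map $p\mapsto p(1-2q)+q$ is a strictly increasing bijection onto $[q,\tfrac12]$ (for $q<\tfrac12$) and $h_b$ is strictly increasing there, so there is a unique solution $p=\frac{h_b^{-1}(1-\epsilon)-q}{1-2q}$, with $h_b^{-1}$ the branch valued in $[0,\tfrac12]$; this $p$ lies in $[0,\tfrac12]$ precisely when $\epsilon\le 1-h_b(q)$, the regime in which the statement has content. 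Being the unique feasible point of the restricted program, it is its maximizer, hence the KRCC-minimizing BSC test-channel. Alternatively, without invoking the $p\le\tfrac12$ reduction, use $Q(P_X,P_{Y|X})=\tfrac14+\tfrac{1}{2\pi}\arcsin(2p-1)$, which is increasing in $p$: of the two roots $p$ and $1-p$ of $1-h_b(p\star q)=\epsilon$, the smaller one minimizes the cost.

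The single non-routine point is this privacy computation: one must verify that concatenating the reverse of the obfuscation channel with the query noise again produces a BSC --- which relies on the source being uniform so that $P_{X|Y}$ remains symmetric --- and keep track of the correct branch of $h_b^{-1}$ as well as the degenerate case $q=\tfrac12$, where $I(P_Y;P_{Y|F})\equiv 0$, no obfuscation is needed, and the expression degenerates to $0/0$. The remainder is bookkeeping layered on top of Corollary~\ref{cor:1}.
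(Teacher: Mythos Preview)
The paper does not supply a separate proof of Corollary~\ref{cor:2}; it is presented as a direct specialization of Theorem~\ref{th:2}/Corollary~\ref{cor:1} to symmetric test-channels. Your argument is precisely that specialization and is correct: restrict \eqref{eq:th:3} to $p_1=p_2=p\in[0,\tfrac12]$, observe that the equality constraint $I(P_Y;P_{Y|F})=\epsilon$ then reads $1-h_b(p\star q)=\epsilon$, and invert.

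Two remarks worth recording. First, you rightly flag that the hypothesis as literally written, ``$P_{F|Y}$ is BSC$(q)$,'' cannot be what is meant, since that would make the privacy functional independent of $p$; your reading --- the query noise $P_{F|X}$ is BSC$(q)$, and with a uniform source the induced $P_{F|Y}$ is BSC$(p\star q)$ --- is the one that reproduces the stated formula. Second, your side observation that the derivation silently relies on $P_X$ being uniform (so that $P_{X|Y}$ is again BSC$(p)$ and $P_Y$ is uniform, giving $I(P_Y;P_{Y|F})=1-h_b(p\star q)$) is accurate; the paper's own phrasing ``$P_{Y|X}(0|1)=P_{X|Y}(1|0)=p$'' indeed only holds under uniformity, so this assumption is implicit in the corollary. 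Your handling of the branch of $h_b^{-1}$, the feasibility window $\epsilon\le 1-h_b(q)$, and the degenerate case $q=\tfrac12$ fills in routine caveats the paper omits.
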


\section{Numerical Simulations}
\label{sec:sim}
\begin{figure}[t]
 \centering \includegraphics[width=0.6\linewidth, draft=false]{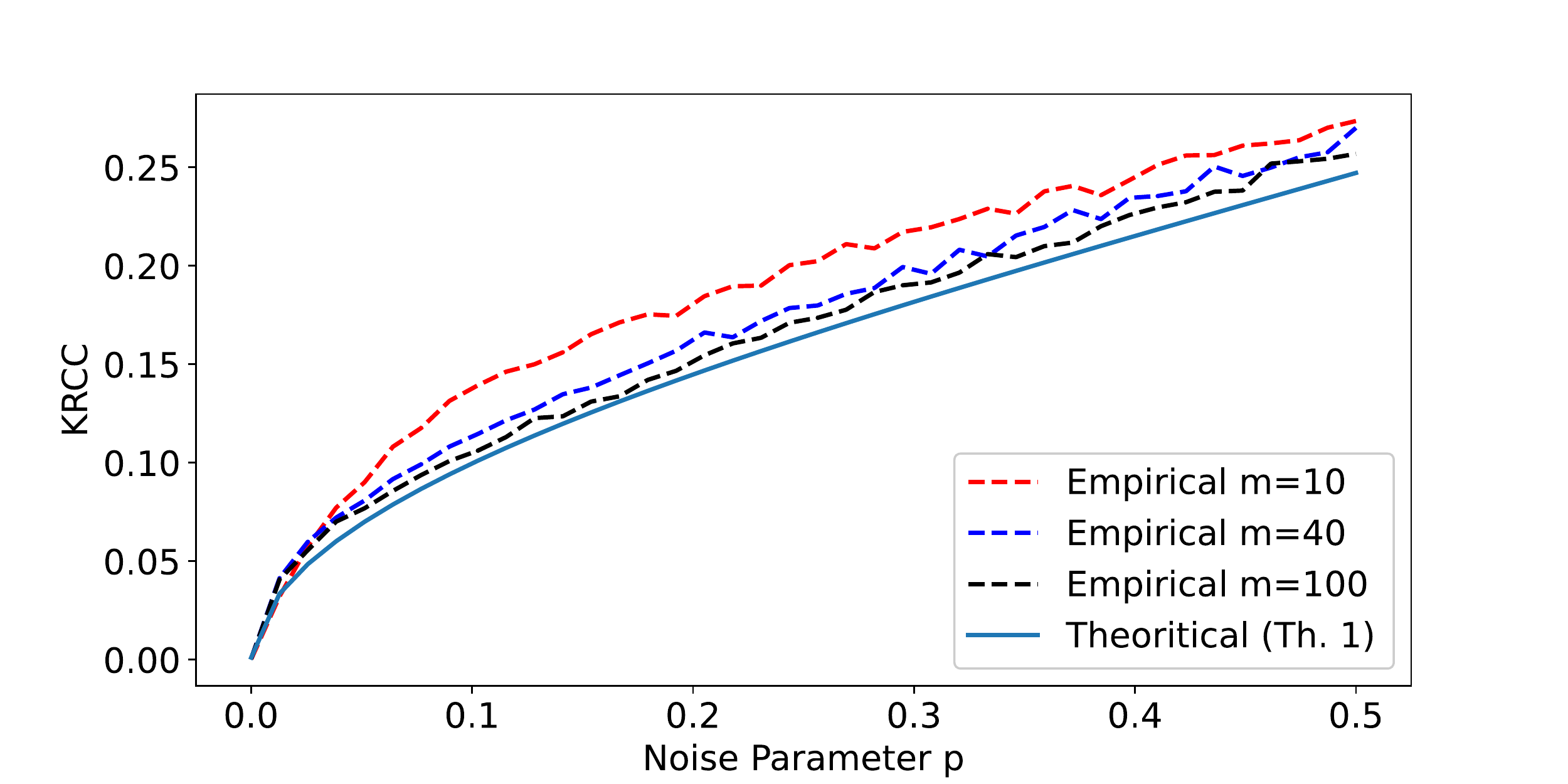}
 \caption{Comparison of analytical derivation of KRCC with empirical observations through numerical simulation.}
 \label{fig:1}
\end{figure}
This section provides numerical simulations to illustrate some of the theoretical derivations provided in the prior sections.
\subsection{Analytical and Empirical Simulation of KRCC}
In the proof of Theorem \ref{th:1}, we show that the resulting KRCC from obfuscating a dataset $\mathsf{X}$ parametrized by $(n,m,\mathsf{X},P_X)$ using a single-letter obfuscation mechanism $P_{Y|X}$ is given by $Q(P_X,P_{Y|X})$. To verify this, we have simulated the obfuscation mechanism when the obfuscating test-channel is $BSC(p), p\in [0,\frac{1}{2}]$ is applied to a dataset with $P_X(0)=P_X(1)=\frac{1}{2}$, $n=200$, and $m\in \{10,40,100\}$. To ensure accuracy, we have performed numerical simulations for each value of $m$ by generating the dataset 40 times, performing obfuscation and calculating the resulting KRCC. Figure \ref{fig:1} shows the resulting analytical and empirically observed KRCCs. As can be observed the analytical result is close to the empirical performance and the empirical KRCC converges to the analytical derivation as $m$ becomes larger. 
\subsection{Asymmetric Obfuscating Test-Channels}
\begin{figure}[t]
 \centering \includegraphics[width=0.6\linewidth, draft=false]{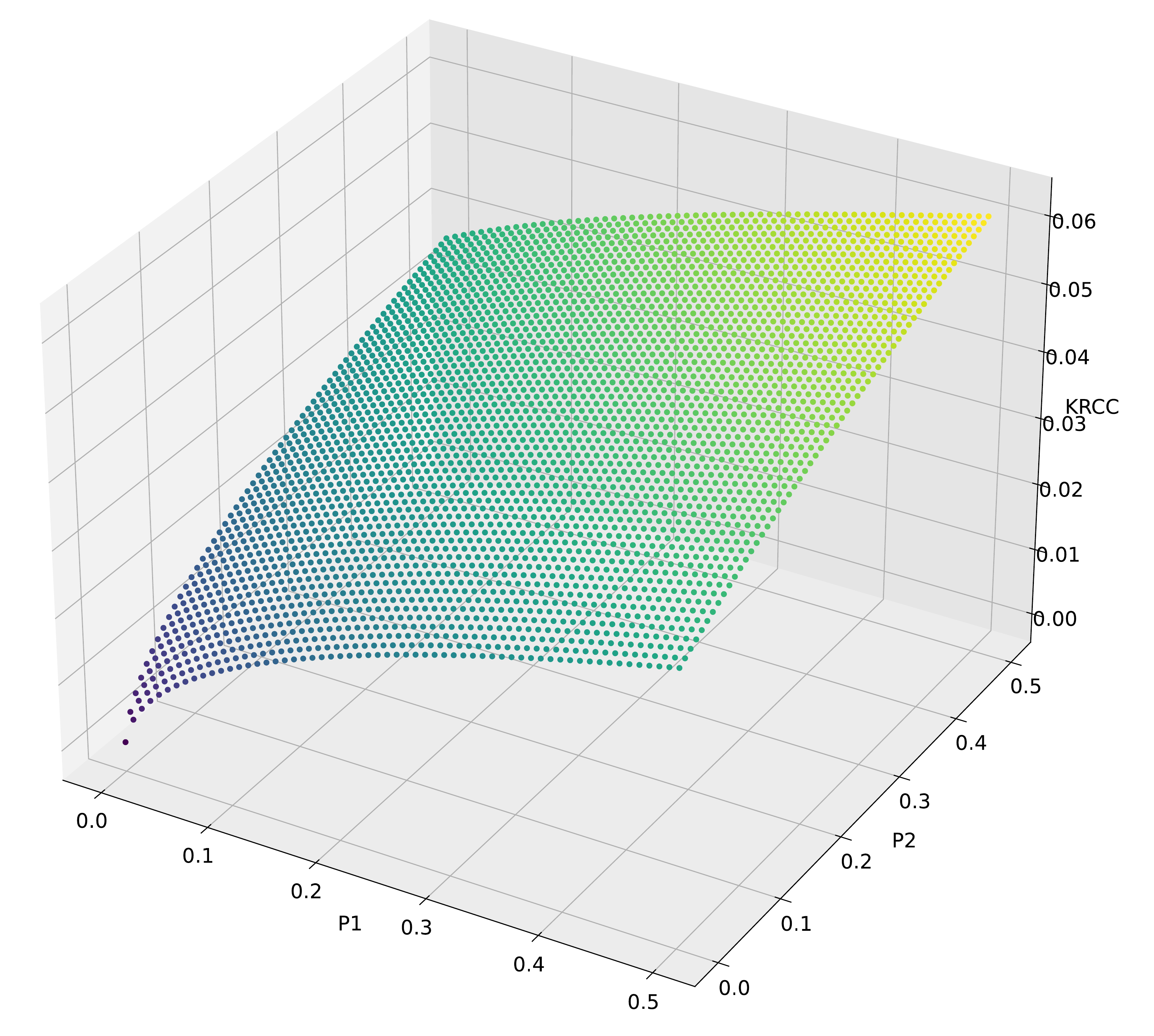}
 \caption{KRCC values for asymmetric obfuscation mechanisms.}
 \label{fig:4}
\end{figure}
In Section \ref{sec:th}, we argue that the objective function of Equation \eqref{eq:th:3} is convex which implies the KRCC is concave as a function of $p_1=P_{Y|X}(1|0)$ and $p_2=P_{Y|X}(0|1)$. Figure \ref{fig:4} shows the KRCC when $P_X(0)=P_X(1)=\frac{1}{2}$ and $m\to \infty$ for $p_1,p_2\in [0,\frac{1}{2}]$. It can be observed that KRCC is convex in $(p_1,p_2)$ as predicted. 
\subsection{Privacy-Utility Tradeoff}
In Figure \ref{fig:3}, we have shown the privacy-utility tradeoff for the scenario where a symmetric dataset ($P_X(1)=\frac{1}{2}$) is obfuscated using an optimal symmetric test-channel ($P_{Y|X}(0|1)=P_{Y|X}(1|0)$), and the query noise is modeled by a BSC(0.1). The resulting achievable privacy-utility region $\mathcal{R}(n,m,P_X,P_{F|X})$ is shown as the blue shaded region in the figure. The optimal symmetric test-channel used in the simulation is derived using Corollary \ref{cor:2} in the previous section. 
\begin{figure}[h]
 \centering \includegraphics[width=0.6\linewidth, draft=false]{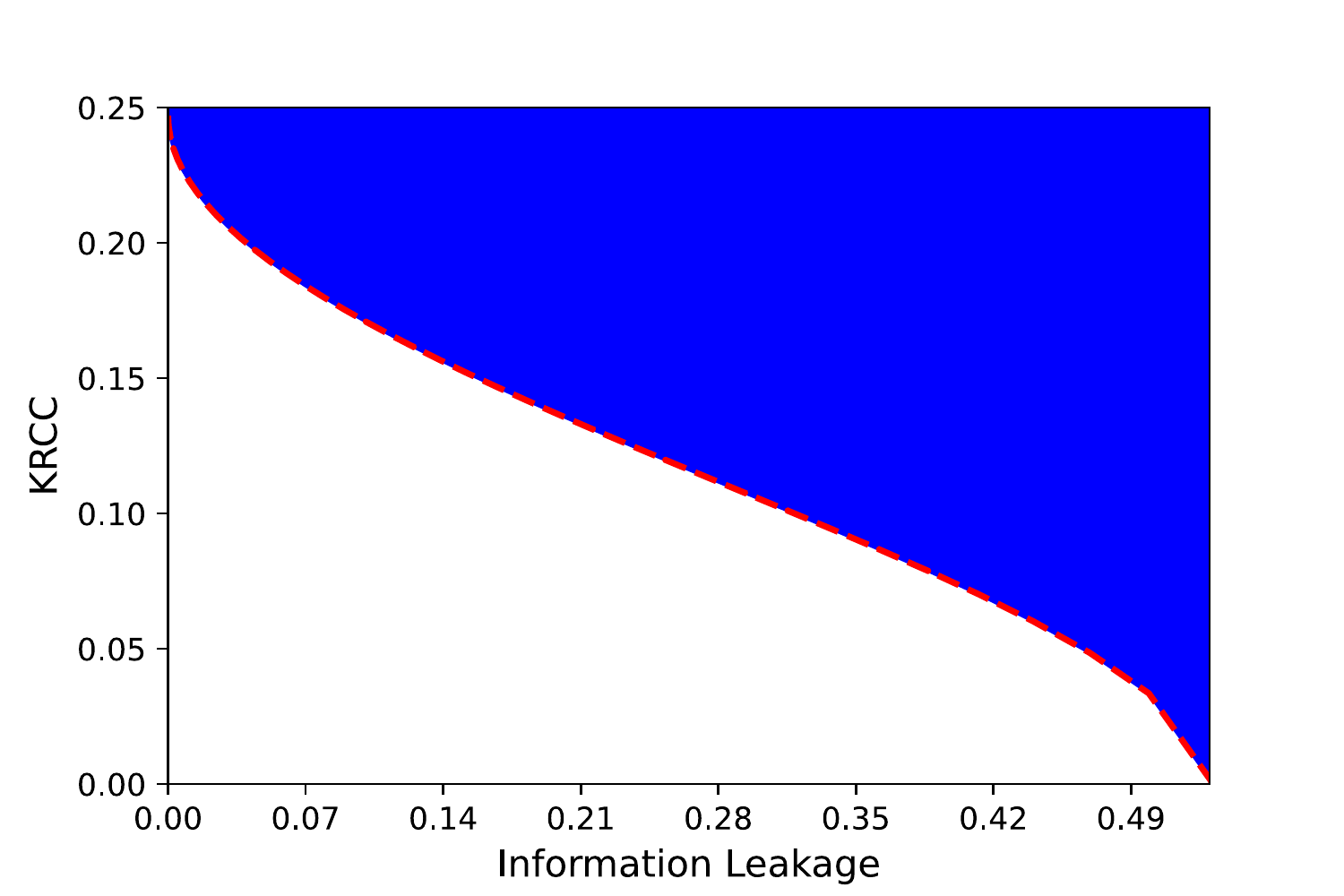}
 \caption{Privacy-Utility Tradeoff. The shaded region indicates the achievable privacy-utility set.}
 \label{fig:3}
\end{figure}

\section{Conclusion}
\label{sec:conclusion}
We have considered the privacy-utility tradeoff in dataset obfuscation, where utility is measured with respect to KRCC metric and privacy is measured as privacy leakage under fingerprinting attacks. 
We have quantified a fundamental trade-off between rank-preservation and user anonymity. We have considered
single-letter obfuscation mechanisms  and their fundamental performance limits were characterized. We have characterized the optimal obfuscating test-channel, optimizing the privacy-utility tradeoff in the form of a convex optimization problem which can be solved efficiently.



\bibliographystyle{unsrt}
\bibliography{References}

\newpage
\appendices
\section{Proof of Theorem \ref{th:1}}
\label{A:1}
Consider a fixed $n,m,q\in \mathbb{N}$, a distribution $P_X$, conditional distributions $P_{F|X}$ and $P_{Y|X}$. We first evaluate the resulting KRCC measure when a single-letter obfuscation mechanism $P_{Y|X}$ is applied to a dataset  $\mathsf{X}$ parametrized by $(n,m,\mathcal{X},P_X)$. Let $R_d(\cdot)$ and $\widehat{R}_d(\cdot)$ denote the degree-based rank function associated with the original dataset $\mathsf{X}$ and obfuscated dataset $\mathsf{Y}$, respectively. Then, 
\begin{align}
\nonumber   &\mathbb{E}(d_{KRCC}(R_d, \widehat{R}_d))
   \\&\nonumber=\mathbb{E}\bigg( \frac{1}{n(n-1)}\sum_{k,l=1}^{n} \mathbbm{1} \Big(R_d(k)>R_d(l) \!\!\!\And\!\!\! \widehat{R}_d(k)<\widehat{R}_d(l)\Big)\bigg)
   \\&\nonumber \stackrel{(a)}{=}  \frac{1}{n(n-1)}\sum_{k,l=1}^{n} P \Big(R_d(k)>R_d(l) \!\!\!\And\!\!\! \widehat{R}_d(k)<\widehat{R}_d(l)\Big)
   \\&\nonumber= \frac{1}{n(n-1)} \sum_{k,l=1}^{n} P\Big(\frac{1}{\sqrt{m}}R_d(k)\!>\!\frac{1}{\sqrt{m}}R_d(l) 
   \!\!\!\And\!\!\! \frac{1}{\sqrt{m}}\widehat{R}_d(k)\!<\!\frac{1}{\sqrt{m}}\widehat{R}_d(l)\Big)
   \\&\nonumber\stackrel{(b)}{=} \frac{1}{n(n-1)}\sum_{k,l=1}^{n} P\Big(\frac{1}{\sqrt{m}}D_\mathsf{X}(k)\!>\!\frac{1}{\sqrt{m}}D_\mathsf{X}(l) 
    \!\!\And\!\! \frac{1}{\sqrt{m}}D_\mathsf{Y}(k)\!<\!\frac{1}{\sqrt{m}}D_\mathsf{Y}(l)\Big)
   \\&\nonumber\stackrel{(c)}{=}  P\Big(\frac{1}{\sqrt{m}}D_\mathsf{X}(1)>\frac{1}{\sqrt{m}}D_\mathsf{X}(2) \!\!\!\And\!\!\! \frac{1}{\sqrt{m}}D_\mathsf{Y}(1)<\frac{1}{\sqrt{m}}D_\mathsf{Y}(2)\Big)
   \\&\nonumber=
    P\Big(\frac{1}{\sqrt{m}}\sum_{j=1}^m X_{1,j}>\frac{1}{\sqrt{m}}\sum_{j=1}^m X_{2,j}  \!\!\!\And\!\!\!
   \\&\nonumber\qquad \qquad \qquad \qquad \qquad \frac{1}{\sqrt{m}}\sum_{j=1}^m Y_{1,j}<\frac{1}{\sqrt{m}}\sum_{j=1}^m Y_{2,j}\Big)
   \\&=  P\Big(\frac{1}{\sqrt{m}}\sum_{j=1}^m X_{2,j}- X_{1,j}<0 \!\!\!\And\!\!\! \frac{1}{\sqrt{m}}\sum_{j=1}^m Y_{1,j}- Y_{2,j}<0\Big),
   \label{eq:A:1:1}
\end{align}
where (a) follows the form linearity of expectation, (b) uses the definition of degree-based rank function (Definition \ref{def1}), and (c) follows from the fact that the original dataset elements are IID and in single-letter obfuscation mechanisms the obfuscating test-channels are statistically identical. We bound the last term using a generalization of the Berry-Esseen result to multivariate scenarios given in \cite[Theorem 1.1]{raivc2019multivariate}. The theorem is stated below for completeness. 
\begin{Theorem}
\label{th:4}
[\cite{raivc2019multivariate}, Theorem 1.1]
Let $(Z_{1,i},Z_{2,i}), i\in [m]$ be independent pairs of  sequences of independent, zero-mean, and unit-variance random variables, where $m\in \mathbb{N}$, and let $W'_j\triangleq  \frac{1}{\sqrt{m}}{\sum_{i\in [m]} Z_{j,i}}, j\in \{1,2\}$. Then, for any measurable convex set $\mathcal{A}$,
\begin{align*}
|P((W'_1, W'_2) \!\in\! \mathcal{A})- P((N'_1,N'_2) \in \mathcal{A})|\leq \frac{(42\sqrt[4]{2}+16)}{m\sqrt{m}} \sum_{i=1}^{m} \mathbb{E}|\underline{Z}_i|^3,
\end{align*}
where $(N'_1,N'_2)$ is a pair of independent and unit-variance Gaussian random variables, and $|\underline{Z}_i|\triangleq \sqrt{Z_{1,i}^2+Z_{2,i}^2}, i\in[m]$.
\end{Theorem}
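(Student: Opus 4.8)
The plan is to prove Theorem \ref{th:4} by Stein's method for multivariate Gaussian approximation, tracking every constant so that the bivariate case reproduces the stated factor $(42\sqrt[4]{2}+16)$. Write $W=(W_1',W_2')=\frac{1}{\sqrt m}\sum_{i\in[m]}Z_i$ with $Z_i=(Z_{1,i},Z_{2,i})$, so that by the independence and unit-variance hypotheses $\mathbb{E}Z_i=0$ and $\mathrm{Cov}(Z_i)=I_2$, and the target is $N=(N_1',N_2')\sim\mathcal{N}(0,I_2)$. The starting point is the Stein characterization of the bivariate standard normal: $\mathbb{E}[\Delta f(N)-\langle N,\nabla f(N)\rangle]=0$ for smooth $f$. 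For a test function $h$, I would solve the Stein equation $\Delta f(x)-\langle x,\nabla f(x)\rangle = h(x)-\mathbb{E}[h(N)]$ via the Ornstein--Uhlenbeck representation $f_h(x)=-\int_0^\infty\big(\mathbb{E}[h(e^{-s}x+\sqrt{1-e^{-2s}}\,N)]-\mathbb{E}[h(N)]\big)\,ds$, which reduces the target quantity to $\mathbb{E}[h(W)]-\mathbb{E}[h(N)]=\mathbb{E}[\Delta f_h(W)-\langle W,\nabla f_h(W)\rangle]$.

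Next I would carry out the leave-one-out (Lindeberg) swapping argument. For each $i$, set $W^{(i)}=W-\frac{1}{\sqrt m}Z_i$ and Taylor-expand $f_h$ around $W^{(i)}$ to third order. Because the Stein operator is tuned to match the first two moments of the Gaussian, the zeroth-, first-, and second-order contributions cancel exactly against $\Delta f_h$ and $\langle W,\nabla f_h\rangle$ once one uses $\mathbb{E}[Z_i]=0$ and $\mathrm{Cov}(Z_i)=I_2$ together with the independence of $Z_i$ from $W^{(i)}$. What survives is a third-order remainder bounded by $\tfrac{1}{6}\sup\|\nabla^3 f_h\|\cdot\frac{1}{m^{3/2}}\sum_{i\in[m]}\mathbb{E}\|Z_i\|^3$. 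The task therefore becomes bounding the third derivatives of the Stein solution $f_h$ in terms of the regularity of $h$: differentiating under the OU integral and integrating by parts to transfer derivatives from $h$ onto the Gaussian kernel yields $\|\nabla^3 f_h\|$ controlled by $\|\nabla h\|$ times an explicit constant arising from integrals of the form $\int_0^\infty e^{-s}(1-e^{-2s})^{-1/2}\,ds$.

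The essential difficulty is that $h=\mathbbm{1}_{\mathcal{A}}$ for a convex set $\mathcal{A}$ is not differentiable, so I would mollify it: replace $\mathbbm{1}_{\mathcal{A}}$ by $h_\varepsilon=\mathbbm{1}_{\mathcal{A}}*\varphi_\varepsilon$, a convolution with a Gaussian density of scale $\varepsilon$. This splits the error as $|P(W\in\mathcal{A})-P(N\in\mathcal{A})|\le |\mathbb{E}[h_\varepsilon(W)]-\mathbb{E}[h_\varepsilon(N)]|+(\text{smoothing bias})$, where the bias is the Gaussian measure of the $\varepsilon$-neighborhood of $\partial\mathcal{A}$. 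The decisive geometric input, and the reason convexity is required, is that for convex $\mathcal{A}$ this neighborhood has Gaussian measure at most $c\,\varepsilon$, with half-spaces extremal; the dimension enters here only through the Gaussian surface-area constant, which scales like $d^{1/4}$. Combining the smoothed-test-function bound, which behaves like $\varepsilon^{-1}m^{-3/2}\sum_i\mathbb{E}\|Z_i\|^3$ after the derivative estimates, with the bias $\lesssim d^{1/4}\varepsilon$ and optimizing over $\varepsilon$ balances the two contributions and produces the $d^{1/4}$ dependence.

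To avoid a spurious multiplicative blow-up and obtain the clean linear appearance of $\sum_{i}\mathbb{E}\|Z_i\|^3$, I would run the swapping and smoothing argument inductively in the number of summands, letting the smoothing scale adapt at each step, and then specialize to $d=2$, where $d^{1/4}=\sqrt[4]{2}$ and the balanced constant becomes $42\sqrt[4]{2}+16$; since $\frac{1}{m^{3/2}}\sum_{i\in[m]}\mathbb{E}\|Z_i\|^3=\frac{1}{m\sqrt m}\sum_{i\in[m]}\mathbb{E}|\underline{Z}_i|^3$, this is exactly the claimed inequality. The hard part will be this final constant extraction: it demands the sharp convex-geometry estimate (half-space extremality for the Gaussian boundary measure), tight rather than order-of-magnitude bounds on $\|\nabla^3 f_h\|$, and the inductive tuning of $\varepsilon$, all executed with exact numerical bookkeeping—whereas the qualitative $O(m^{-1/2})$ rate follows already from the Stein-plus-smoothing skeleton above.
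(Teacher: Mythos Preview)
The paper does not prove this theorem at all: it is quoted verbatim from \cite{raivc2019multivariate} (Theorem~1.1) and used as a black box inside the proof of Theorem~\ref{th:1}. The text introducing it says explicitly ``The theorem is stated below for completeness,'' and no argument is supplied. So there is nothing in the paper to compare your proposal against.

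That said, your sketch is broadly in the spirit of how Rai\v{c} actually establishes the result in the cited reference---Stein's method for the multivariate normal via the Ornstein--Uhlenbeck semigroup, smoothing of the convex indicator, the Gaussian surface-area bound for convex sets (which produces the $d^{1/4}$ factor, hence $\sqrt[4]{2}$ in dimension two), and an inductive/recursive optimization to avoid losing a factor when balancing the smoothing scale. If your goal is to reproduce the \emph{paper's} treatment of Theorem~\ref{th:4}, the correct move is simply to cite \cite{raivc2019multivariate}; if your goal is to supply an independent proof, be aware that extracting the precise constant $42\sqrt[4]{2}+16$ requires the exact estimates of that paper, not just the qualitative skeleton you describe.
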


We let $W_1\triangleq \frac{1}{\sqrt{m}}\sum_{j=1}^m X_{2,j}- X_{1,j}$ and $W_2\triangleq \frac{1}{\sqrt{m}}\sum_{j=1}^m Y_{1,j}- Y_{2,j}$. Then, $W_1$ and $W_2$ are zero-mean variables since $X_{i,j}, i\in [n], j\in [m]$ are IID and the obfuscating test-channels are statistically identical so that  $Y_{i,j}, i\in [n], j\in [m]$ are IID.
Next, we find the covariance matrix of $(W_1,W_2)$. First, we find the variance of $W_1$:
\begin{align*}
    &Var(W_1)=\mathbb{E}(W_1^2)= \mathbb{E}\left(\left(\frac{1}{\sqrt{m}} \sum_{j=1}^m (X_{1,j}- X_{2,j})\right)^2\right)
    \\& \stackrel{(a)}{=} \frac{1}{m}\sum_{j=1}^{m}\mathbb{E} \left(\left(X_{2,j}- X_{1,j}\right)^2\right)+ \frac{1}{m}\sum_{i\neq j}\mathbb{E}\left(\left(X_{1,i}-X_{2,i})(X_{1,j}-X_{2,j}\right)\right)
    \\& 
    \stackrel{(b)}{=}
    \mathbb{E}\left(\left(X_{1,1}- X_{2,1}\right)^2\right)
    \stackrel{(c)}{=}P(X_{1,1}\neq X_{2,1})
     = P_{X_1,X_2}(0,1)+P_{X_1,X_2}(1,0)\stackrel{(d)}{=} 2P_X(0)P_X(1),
\end{align*}
where in (a) we have used the fact that the dataset elements are identically distributed and the test-channels are statistically identical,  (b) follows since $(X_{1,i},X_{2,i})$ is independent of $(X_{1,j},X_{2,j})$ since the dataset elements are IID, (c) follows since $X_{1,1}-X_{2,1}\in \{-1,0,1\}$, and (d) follows since $X_1$ and $X_2$ are IID.
Variance of $W_2$ is similarly derived as:
\begin{align*}
    & Var(W_2)= 2P_Y(0)P_Y(1)
    = 2\Big(P_X(0)P_{Y|X}(0|0)+ P_X(1)P_{Y|X}(0|1)\Big)
    \Big(P_X(0)P_{Y|X}(1|0)+ P_X(1)P_{Y|X}(1|1)\Big)
\end{align*}
The covariance between $W_1$ and $W_2$ is given by: 
\begin{align*}
    &Cov(W_1,W_2)= \mathbb{E}(W_1W_2)
    = \mathbb{E}\left(\frac{1}{\sqrt{m}} \sum_{i=1}^m (X_{2,i}- X_{1,i})) \frac{1}{\sqrt{m}} \sum_{j=1}^m (Y_{1,j}- Y_{2,j}))\right)
    \\&= \frac{1}{m} \sum_{i=1}^m\mathbb{E}\left( (X_{2,i}- X_{1,i}) (Y_{1,i}- Y_{2,i})\right)
    = \mathbb{E}\left((X_{2,1}- X_{1,1}) (Y_{1,1}- Y_{2,1})\right)
    = 2\mathbb{E}(X_{2,1}Y_{1,2})-2\mathbb{E}\left(X_{1,1}Y_{1,1} \right)
    \\&= 2P_X(1)P_Y(1)- 2P_{X,Y}(1,1)
     = 2P_X(1)(P_Y(1)-P_{Y|X}(1,1))
\end{align*}
Let $\Sigma_{W_1,W_2}$ be the covariance matrix of $(W_1,W_2)$. We define  $\underline{W}$ as the column vector consisting of $W_1,W_2$ and define $\underline{W}'\triangleq \Sigma_{W_1,W_2}^{\frac{-1}{2}}\underline{W}$. Then,
\begin{align*}
    &W'_1=\frac{1}{\sqrt{m}} \sum_{j=1}^m \Sigma_{W_1,W_2}^{\frac{-1}{2}}(1,1) (X_{2,j}-X_{1,j})+ \Sigma_{W_1,W_2}^{\frac{-1}{2}}(1,2)(Y_{1,j}-X_{2,j}),\\
    &W'_2=\frac{1}{\sqrt{m}} \sum_{j=1}^m\Sigma_{W_1,W_2}^{\frac{-1}{2}}(2,1) (X_{2,j}-X_{1,j})+ \Sigma_{W_1,W_2}^{\frac{-1}{2}}(2,2)(Y_{1,j}-X_{2,j}),\\
\end{align*}
where $\Sigma^{-\frac{1}{2}}_{W_1,W_2}(i,j), i,j\in \{1,2\}$ is the $(i,j)$th element of the matrix $\Sigma^{-\frac{1}{2}}_{W_1,W_2}$. 
It should be noted that $\Sigma_{W_1,W_2}^{\frac{-1}{2}}$ exists since $\Sigma_{W_1,W_2}$ is positive semi-definite. It is straightforward to check that $W'_1,W'_2$ are zero-mean and unit variance. 
Consequently, $W'_1,W'_2$ satisfy the properties of Theorem \ref{th:4}. Let 
\[\mathcal{A}\triangleq \{\underline{w}'\in \mathbb{R}^2: \Sigma_{W_1,W_2}^{\frac{1}{2}}\underline{w}'\leq 0\}\]
Then, by Theorem \ref{th:4}, we have: 
\begin{align*}
|P((W'_1, W'_2) \in \mathcal{A})- P((N'_1,N'_2) \in \mathcal{A})|\leq \frac{(42\sqrt[4]{2}+16)}{\sqrt{m}} \mathbb{E}|\underline{Z}|^3,
\end{align*}
where $\underline{Z}=({Z}_1,{Z}_2)$, and 
\begin{align*}
&{Z}_1= \Sigma_{W_1,W_2}^{\frac{-1}{2}}(1,1) (X_{2,1}-X_{1,1})+ \Sigma_{W_1,W_2}^{\frac{-1}{2}}(1,2)(Y_{1,1}-X_{2,1})
\\&{Z}_2= \Sigma_{W_1,W_2}^{\frac{-1}{2}}(2,1) (X_{2,1}-X_{1,1})+ \Sigma_{W_1,W_2}^{\frac{-1}{2}}(2,2)(Y_{1,1}-X_{2,1}). \end{align*}
Note that 
\[|\underline{Z}|\leq |\Sigma^{-\frac{1}{2}}_{W_1,W_2}|_F |[X_{2,1}-X_{1,1}, Y_{1,2}-{Y_{2,1}}]|,\]
where $|\cdot|_F$ is the Frobenius norm. Let $\Sigma_{W_1,W_2}= V \Lambda V^*$, where $V$ and $\Lambda$ are the singular value decomposition matrices associated with $\Sigma_{W_1,W_2}$, $V$ is unitary, and $V^*$ is the conjugate transpose of $V$. So, $\Sigma^{-\frac{1}{2}}_{W_1,W_2}=\Lambda^{-\frac{1}{2}}V^*$. So, $|\Sigma^{-\frac{1}{2}}_{W_1,W_2}|_F\leq |\Lambda^{-\frac{1}{2}}|_F|V^*|_F\leq  \frac{4}{\sqrt{\lambda^*}}$, where $\lambda^*$ is the smallest eigenvalue of $\Sigma_{W_1,W_2}$ and we have used the fact that $|V^*|_F=trace(VV^*)=trace(I_2) = 2$. Furthermore, by the Gershgorin circle theorem \cite{sahami2022fast,bhatia2013matrix},  we have $\lambda^*\geq \max(\sigma_{1,1}-|\sigma_{1,2}|, \sigma_{2,2}-|\sigma_{2,1}|)$. So, $|\Sigma^{-\frac{1}{2}}_{W_1,W_2}|_F\leq \frac{4}{\lambda^*}=\theta$. Let $\gamma\triangleq\mathbb{E}(|[X_{2,1}-X_{1,1}, Y_{1,2}-{Y_{2,1}}]|)$. Then,  $\gamma$ is given by:

\begin{align*}
    &\gamma=\mathbb{E}\Bigg(\Big( {(X_{2,1}-X_{1,1})^2+(Y_{1,1}-Y_{2,1})^2}\Big)^\frac{3}{2}\Bigg)
    \\& =P\big(|X_{2,1}-X_{1,1}|=0, |Y_{1,1}-Y_{2,1}|=1\big)
    \\& +P\big(|X_{2,1}-X_{1,1}|=1, |Y_{1,1}-Y_{2,1}|=0\big)
    \\&+ 2^{\frac{3}{2}}P\big(|X_{2,1}-X_{1,1}|=1, |Y_{1,1}-Y_{2,1}|=1\big)
    \\&=2P(X=Y)P(X\neq Y)+2^{\frac{5}{2}}(P_{X,Y}(0,0)P_{X,Y}(1,1)+P_{X,Y}(0,1)P_{X,Y}(1,0))
\end{align*}
So far, we have shown that: 
\begin{align*}
|P((W'_1, W'_2) \in \mathcal{A})- P((N'_1,N'_2) \in \mathcal{A})|\leq \frac{(42\sqrt[4]{2}+16)}{\sqrt{m}} \theta\gamma.
\end{align*}
Let $\underline{N}=[N_1,N_2]$, where $\underline{N}= \Sigma_{W_1,W_2}^{\frac{1}{2}}\underline{N}'$. It is straightforward to show that $P((N'_1,N_2')\in \mathcal{A})= F_{N_1,N_2}(0,0)=Q(P_X,P_{Y|X})$ and that $\Sigma_{N_1,N_2}=\Sigma_{W_1,W_2}$. As a result, from Equation \eqref{eq:A:1:1}, we have:
\begin{align*}
   | \mathbb{E}(d_{KRCC}(R_d,\widehat{R}_d))- Q(P_X,P_{Y|X})|\leq \frac{(42\sqrt[4]{2}+16)}{\sqrt{m}} \theta\gamma. 
\end{align*} 
Next, we evaluate the privacy cost. We have:
\begin{align*}
&I(U; F^q,\mathsf{Y})= I(U;\mathsf{Y})+I(U;F^q|\mathsf{Y})
\\&\stackrel{(a)}{=} I(U;F^q|\mathsf{Y})=
I(U,\mathsf{Y}; F^q)-I(\mathsf{Y};F^q),
\end{align*}
where we have used the chain rule of mutual information in the first and last equality, and in (a) we have used the fact that the dataset is independent of the identity of the victim (recall that the victim is chosen randomly and uniformly from the dataset members, independently of dataset elements). Furthermore, we have:
\begin{align*}
   \frac{1}{q} I(U,\mathsf{Y}; F^q)= \frac{1}{q} \sum_{j=1}^q I(Y_{U,i_j}; F_j)= I(P_Y,P_{F|Y}). 
\end{align*}
Additionally,
\begin{align*}
    I(\mathsf{Y};F^q)=D_{KL}(P_{F^q}|| P_{F^q|\mathsf{Y}})
\end{align*}
Note that $\mathsf{Y}$ is a random unstructured code with single-letter distribution $P_Y$ and hence is a 
\textit{good code} for a channel with transition probability $P_{Y|F}$ and from \cite[Theorem 7]{polyanskiy2013empirical}, we have 
\begin{align*}
 D_{KL}(P_{F^q}|| P_{F^q|\mathsf{Y}})\leq \zeta+ b\frac{\log^{\frac{3}{2}}m}{\sqrt{m}}.
\end{align*}
for some $b>0$. Consequently,
\begin{align*}
   I(P_Y,P_{F|Y})\leq  \frac{1}{q} I(U; F^q,\mathsf{Y})\leq
I(P_Y,P_{F|Y})+\zeta+b\frac{\log^{\frac{3}{2}}m}{\sqrt{m}}.
\end{align*}
This completes the proof. 
\qed

\section{Proof of Theorem \ref{th:2}}
\label{A:3}
From Theorem \ref{th:1}, we need to solve the following optimization problem:
\begin{align*}
&P^{\epsilon}_{Y|X}=\argmin_{P_{Y|X}:I(U; F^q,\mathsf{Y})<\epsilon} F_{N_1,N_2}(0,0)
\\&= \argmin_{P_{Y|X}:I(U; F^q,\mathsf{Y})<\epsilon} \int_{n_1=-\infty}^{0} \int_{n_2=-\infty}^{0} \frac{1}{2\pi \sqrt{|\Sigma|}}\exp\Big({-\frac{1}{2} \underline{n}^t \Sigma^{-1}\underline{n}}\Big)dn_1 dn_2
\end{align*}
where, $\underline{n}=\begin{bmatrix}
    n_1 \\ n_2
\end{bmatrix}$. 

Let us define $p_1\triangleq P_{Y|X}(1|0),  p_2\triangleq P_{Y|X}(0|1)$. and define  variables $N'_1= \frac{N_1}{\sqrt{2Var(N_1)}},N'_2= \frac{N_2}{\sqrt{2Var(N_2)}}$. Note that $P(N_1\leq 0, N_2\leq 0)=P(N'_1\leq 0, N'_2\leq 0)$.
The covariance of $N'_1, N'_2$ is:
 \begin{align*}
     \Sigma'=\begin{bmatrix}
    \frac{1}{2} & \frac{1}{2}-p' \\ \frac{1}{2}-p' & \frac{1}{2}\end{bmatrix}
 \end{align*} 
 where $p'$ is defined as:
\begin{align}
    & p'\triangleq \frac{1}{2}- \frac{\text{Cov}(N_1,N_2)}{2\sqrt{\text{Var}(N_1)\text{Var}(N_2)}}.
    \label{eq:pp}
\end{align}
So, 
\begin{align*}
&P^{\epsilon}_{Y|X}=\argmin_{p_1,p_2:I(U; F^q,\mathsf{Y})<\epsilon} F_{N'_1,N'_2}(0,0)
\\&= \!\!\!\!\argmin_{p_1,p_2:I(U; F^q,\mathsf{Y})<\epsilon} \int_{n'_1=-\infty}^{0} \int_{n'_2=-\infty}^{0} \frac{1}{2\pi \sqrt{|\Sigma'|}}\exp\Big({-\frac{1}{2} \underline{n'}^t \Sigma'^{-1}\underline{n'}}\Big)d\underline{n}'
\end{align*}
The eigenvalues of the covariance matrix $\Sigma'$ are $\lambda_1= 1-p', \lambda_2=p'$ and the eigenvectors are the columns of $V=\frac{1}{\sqrt{2}}\begin{bmatrix} \frac{1}{\sqrt{1-p'}} & \frac{1}{\sqrt{p'}} \\ \frac{1}{\sqrt{1-p'}} & -\frac{1}{\sqrt{p'}} \end{bmatrix}$. Let us define $\underline{n}''= V^t\underline{n}'$
Changing the variables in the integral, we have:
\begin{align*}
 &\argmin_{p_1,p_2:I(U;  F^q,\mathsf{Y})<\epsilon}\int_{n''_1=-\infty}^{0} \int_{n''_2=\sqrt{\frac{1-p'}{p'}}n''_1}^{-\sqrt{\frac{1-p'}{p'}}n''_1} \frac{1}{2\pi} \exp(-\frac{{n''_1}^{2}+{n''_2}^{2}}{2} )d_{n''_1} d_{n''_2}
\end{align*}
Form Equation \eqref{eq:pp} and using Cauchy-Schwarz inequality we have $0<p'<1$, so the inner integral interval $[\sqrt{\frac{1-p'}{p'}}n''_1,-\sqrt{\frac{1-p'}{p'}}n''_1]$ is decreasing as function of $p'$ for all values of $n''_1<0$. Hence, to minimize the utility cost, we should take the maximum value of $p'$ such that $I(U;  F^q,\mathsf{Y})<\epsilon$. That is, the optimization becomes
\begin{align*}  
\argmax_{p_1,p_2:I(P_Y,P_{Y|F})\leq\epsilon}p'=  \argmin_{p_1,p_2:I(P_Y,P_{Y|F})\leq\epsilon} \frac{\text{Cov}(N_1,N_2)}{\sqrt{\text{Var}(N_1)\text{Var}(N_2)}}.
\end{align*}
From the proof of Theorem \ref{th:1}, we have 
\begin{align*}
&Cov(N_1,N_2)= P_X(1)(P_Y(1)-P_{Y|X}(1|1))\\
&= P_X(1) (p_1P_{X}(0)+(1-p_2)P_{X}(1)-P_{Y|X}(1|1))\\
&= P_X(1)P_X(0)(p_1+p_2-1).
\end{align*}
So, the optimization can be re-written as: 
\begin{align*} P^{\epsilon}_{Y|X}\!\!=\!\!  \!\!\!\!\argmin_{p_1,p_2:I(P_Y,P_{Y|F})\leq\epsilon} \!\!\!\frac{p_1+p_2-1}{\sqrt{(\overline{p}_1P_X(0)+p_2P_X(1))(p_1P_X(0)+\overline{p}_2P_X(1))}},
\end{align*}
where $\overline{p}_1\triangleq 1-p_1$ and $\overline{p}_2\triangleq 1-p_2$. 
Note that we can restrict the minimization to $p_1+p_2<1$ since if $p_1+p_2>1$ the objective function in the above optimization is positive and cannot achieve the minimum value since for $p_1+p_2=0$ the value of zero is already achieved. So, 
\begin{align*} P^{\epsilon}_{Y|X}\!\!=\!\!  \!\!\!\argmin_{\substack{p_1,p_2:I(P_Y,P_{Y|F})\leq\epsilon\\p_1+p_2<1}}\!\!\! \frac{p_1+p_2-1}{\sqrt{(\overline{p}_1P_X(0)+p_2P_X(1))(p_1P_X(0)+\overline{p}_2P_X(1))}}.
\end{align*}

Next, we show that $\frac{\text{Cov}(N_1,N_2)}{\sqrt{\text{Var}(N_1)\text{Var}(N_2)}}$ is i) increasing in $p_1$ for any fixed $p_2$, and ii) increasing in $p_2$ for any fixed $p_1$. Hence, we conclude that the optimal value is achieved at the boundary where $I(P_Y,P_{Y|F})=\epsilon$ since for any point not on the boundary, one either reduce $p_1$ or $p_2$ without violating  $I(P_Y,P_{Y|F})\leq\epsilon$. To show i) it suffices to show that  $\frac{\text{Cov}^2(N_1,N_2)}{{\text{Var}(N_1)\text{Var}(N_2)}}$ is  decreasing in $p_1$ for any fixed $p_2$ since $\text{Cov}^2(N_1,N_2)<0$ for $p_1+p_2<1$. That is, we wish to show that the following function is decreasing in $p_1$ for fixed $p_2$: 
\begin{align*}
&\frac{(p_1+p_2-1)^2}{(\overline{p}_1P_X(0)+p_2P_X(1))(p_1P_X(0)+\overline{p}_2P_X(1))}
= \frac{p_1+p_2-1}{\overline{p}_1P_X(0)+p_2P_X(1)}\frac{p_1+p_2-1}{p_1P_X(0)+\overline{p}_2P_X(1)}. 
\end{align*} 
Taking derivative of each component in the multiplication with respect to $p_1$, we have:
\begin{align*}
  &  \frac{\partial}{\partial p_1}\frac{p_1+p_2-1}{\overline{p}_1P_X(0)+p_2P_X(1)}
  = \frac{\overline{p}_1P_X(0)+p_2P_X(1)+P_X(0)(p_1+p_2-1)}{(\overline{p}_1P_X(0)+p_2P_X(1))^2}
  = \frac{p_2}{(\overline{p}_1P_X(0)+p_2P_X(1))^2}
\end{align*}
and
\begin{align*}
  &  \frac{\partial}{\partial p_1}\frac{p_1+p_2-1}{p_1P_X(0)+\overline{p}_2P_X(1)}= \frac{p_1P_X(0)+\overline{p}_2P_X(1)-P_X(0)(p_1+p_2-1)}{(p_1P_X(0)+\overline{p}_2P_X(1))^2}
  = \frac{\overline{p}_2}{(\overline{p}_1P_X(0)+p_2P_X(1))^2}
\end{align*}
Since both derivatives are positive, and both functions are negative-valued, the multiplication has a derivative which is negative with respect to $p_1$ for fixed $p_2$. So, $\frac{\text{Cov}^2(N_1,N_2)}{{\text{Var}(N_1)\text{Var}(N_2)}}$ is  decreasing in $p_1$ for any fixed $p_2$. Hence,  $\frac{\text{Cov}(N_1,N_2)}{\sqrt{\text{Var}(N_1)\text{Var}(N_2)}}$ is  increasing in $p_1$ for any fixed $p_2$ which proves i). The proof of ii) follows similarly. We conclude that the optimum is achieved at $I(P_Y,P_{Y|F})=\epsilon$.
This completes the proof. 
\qed

\end{document}